\newtheorem{definition}{definition}
\newtheorem{prerequisite}{Prerequisite}
\newtheorem{theorem}{theorem}
\begin{document}


\title{Transparent Concurrency Control: Decoupling Concurrency Control from DBMS}



%
%
%
%

\numberofauthors{1} 

\author{
%
%
\alignauthor
\vspace{-1cm}
Ningnan Zhou$^\dag$~~~~Xuan Zhou$^\ddag$~~~~Kian-lee Tan$^\S$~~~~Shan Wang$^\dag$ \\
       \affaddr{$^\dag$ \textit{DEKE Lab, Renmin University of China, Beijing, China}}\\
       \affaddr{$^\ddag$ \textit{School of Data Science \& Engineering, East China Normal University, Shanghai, China}}\\
       \affaddr{$^\S$ \textit{School of Computing, National University of Singapore, Singapore}}\\
       \affaddr{\textsf{zhou.xuan@outlook.com}}
}

\maketitle

\begin{abstract}
For performance reasons, conventional DBMSes adopt monolithic architectures. A monolithic design cripples the adaptability of a DBMS, making it difficult to customize, to meet particular requirements of different applications.
In this paper, we propose to completely separate the code of concurrency control (CC) from a monolithic DBMS. This allows us to add / remove functionalities or data structures to / from a DBMS easily, without concerning the issues of data consistency.
As the separation deprives the concurrency controller of the knowledge about data organization and processing, it may incur severe performance issues.
To minimize the performance loss, we devised a two-level CC mechanism. At the operational level, we propose a robust scheduler that guarantees to complete any data operation at a manageable cost. At the transactional level, the scheduler can utilize data semantics to achieve enhanced performance.
Extensive experiments were conducted to demonstrate the feasibility and effectiveness of our approach.
\end{abstract}

\vspace{-0.1cm}
\section{Introduction}
\vspace{-0.0cm}
Existing implementations of DBMSes are mostly monolithic. This goes against common practice of software engineering, where separation of concerns is an important principle.
Such monolithic design can be attributed to both tradition and performance consideration \cite{gray1981recovery,lomet2009unbundling}, which we believe are no longer valid in today's computing environment.
On the one hand, applications are diversifying. They impose increasingly diverse requirements on DBMS, in terms of both functionality and performance. To meet these requirements, application developers are increasingly incentivized to customize DBMSes, for instance, by adding new data types or indexing schemes.
On the other hand, hardware and platforms are evolving rapidly. We are constantly being forced to modify a DBMS to make the best of new hardware.
A monolithic design unavoidably makes a DBMS difficult to modify or customize.
We believe it is time to consider a loosely coupled architecture of DBMS, which is adaptable to diverse applications and platforms.

Attempts at DBMS decomposition dated back to two decades ago \cite{genesis,chaudhuri2000rethinking}, with limited progress and success.
It has been commonly accepted that a DBMS should be broken into several standard components, such as an interpreter, a query processor, a transaction manager, a storage manager, etc. However, existing DBMSes largely regard this decomposition as an explanatory breakdown instead of a guideline for modularization. Only in recent years, limited but concrete efforts for decomposing a DBMS have been visible. The Deuteronomy project of Microsoft \cite{lomet2009unbundling,lomet2009locking,unbundle_log,levandoski2015high} is a typical example, which attempted to decouple the transaction manager from the storage manager of a distributed database. Another example is today's ``big data'' platforms, such as Hadoop, which separates the data processor and the storage manager to achieve extensibility. Despite these efforts and their inspiring results, the answer to the problem of DBMS decomposition remains inconclusive.

Among all the coupling points in a DBMS, the one between the transaction manager and the data manager appears the most challenging to break \cite{hellerstein2007architecture}. In practice, it also causes the most pain to engineers who attempt to modify a DBMS. When adding a new data format or a new index to a DBMS, it is inevitable to also implement the transactional methods for the data format or index and ensure their compatibility with the entire system. When upgrading a transactional mechanism, such as adding a new concurrency control method, heavy modification has to be introduced to the code of data organization and processing.
To decompose a DBMS, it is crucial to separate the logic of transaction management from that of the data organization and processing component so that modifications on either component do not interfere with the other.

In this paper, we focus on Concurrency Control (CC), a major function of transaction management. We propose to completely separate CC from a DBMS, such that it becomes transparent to the rest of the system.
We call our approach Transparent Concurrency Control (TCC). While this separation is in theory possible, it does not come for free.
Once separated from the data layer, the CC layer is deprived of the knowledge about data semantics. This may introduce severe performance penalty.

A traditional DBMS performs CC at two levels -- the operational level and the transactional level. At the operational level, the CC mechanism ensures isolation among data operations, such as index lookup, index insertion, table scan, etc. To achieve efficiency, the CC methods are normally highly specialized for the particular data models and data processing programs \cite{kornacker1997concurrency}.
After the separation, such specialization is no longer possible, as the CC layer loses the knowledge about the data models or data processing methods. If we adopt a generic but blind CC mechanism, it is unlikely to perform well in all possible circumstances. We conducted experimental study to evaluate three generic CC mechanisms, 2PL, SSI and OCC, at the operational level. We found that the three mechanisms perform poorly on certain workloads, e.g., intensive index insertions.

The CC mechanism at the transactional level ensures the isolation among transactions. At this level, data semantics plays an important role. For instance, locking is widely used for isolation. However, after the separation, we cannot even determine the objects of locking, be it either a tuple or a table or a predicate, as such semantic objects are no longer visible to the CC layer.
Meanwhile, the semantic relationship between data operations is also missing. Traditional DBMSes often utilize these relationships to achieve improved performance.
For instance, as two insertions to the same table are semantically commutative, we can reorder the table insertions of different transactions to achieve a more efficient schedule.

This paper aims to tackle the TCC problem at the operational and transactional levels separately. At the operational level, we employ a try-and-error mechanism that can provide a certain guarantee about the efficiency of CC. At the transactional level, we provide interfaces for developers to declare data semantics to TCC, so that it can be utilized by the CC mechanism. We evaluated the two-level mechanism of TCC on the indexes of a real DBMS. The results demonstrate the potential of TCC in real-world implementation. It makes us optimistic about the feasibility to decompose a DBMS.

To summarize, we mainly made the following contributions in this paper:
\begin{enumerate}[nosep]
\item We introduced the concept and the architecture of TCC and proved its soundness (Sections~3 and 4).
\item We showed that separation of CC from DBMS will incur performance degradation. We identified two types of knowledge gaps, known as predictability gap and semantic gap, which are main reasons for such degradation (Section~5).
\item We devised a mechanism of TCC, which aims to bridge the two knowledge gaps at the operational and transactional levels respectively (Section~6). We conducted experiments to verify its effectiveness (Section~7).
\end{enumerate}

\vspace{-0.1cm}
\section{Related Work}
\vspace{-0.0cm}

There have been several attempts aiming at decomposing a DBMS into loosely coupled modules, with various purposes in minds.

In~\cite{chaudhuri2000rethinking}, Chaudhuri and Weikum envisioned a RISC-style system architecture, aiming to make a DBMS easier to tune and optimize. They propose to decompose a system coarsely into a storage manager and a query processor. Then the query processor can be further decomposed into an index manager, a SPJ query processor, an aggregator, etc. Such a decomposition is expected to enhance our ability of configuring and tuning a database, so as to improve its adaptability to changing workloads and environments. However, there has been little concrete follow-up research, and RISC-style DBMS remains a vision rather than a practical solution.

StagedDB~\cite{stagedb,harizopoulos2005stageddb} provides another approach to decompose a DBMS. It separates the workflow of query processing into a number of self-contained and connected stages, such as a parser, a query optimizer, a query executor, etc. Users are allowed to customize the stages, so that they can support user-defined data types, access methods or cost models~\cite{postgres_interpreter,genesis,exodus}. StagedDB aims at good performance of query processing. It does not address the modularity issue directly.

To the best of our knowledge, the Deuteronomy project of Microsoft~\cite{lomet2009unbundling,lomet2009locking,unbundle_log,levandoski2015high} is the most direct and recent effort to realize a decomposition of DBMS.
The architecture of Deuteronomy decomposes a database kernel into a Transaction Component (TC) responsible for concurrency control and recovery and several Data Components (DCs) responsible for data organization and manipulation.
Such an architecture allows system engineers to develop DCs independently, without concerning the work of TC.
As shown on the left of Figure~\ref{fig:arch}, this in effect places the transaction tier above data organization tier, which provides operational interfaces for data manipulation, such as retrieval, update, deletion and insertion of data items.
The downside of this architecture is two-fold.
First, DC is responsible for ensuring atomicity of data operations. This requires a built-in CC mechanism in the data organization tier. It means that CC has not been completely decoupled from the Data Component.
Second, DC must provide sufficient information for TC to detect conflicts among data operations.
The current implementation of Deuteronomy assumes that conflicts can be inferred through identifers of data objects.
However, in principle, conflicts are not necessarily inferrable from data identifers.
As shown in Figure~\ref{fig:coreference}, two seemingly separate items may refer to the same piece of physical data. If such implicit connection is unknown to TC, isolation is hardly achievable.
This assumption limits the flexibility of DC, as data sharing or co-referencing cannot be used freely.

\begin{figure}
\centering
\vspace{-0cm}
\includegraphics[width=0.48\textwidth]{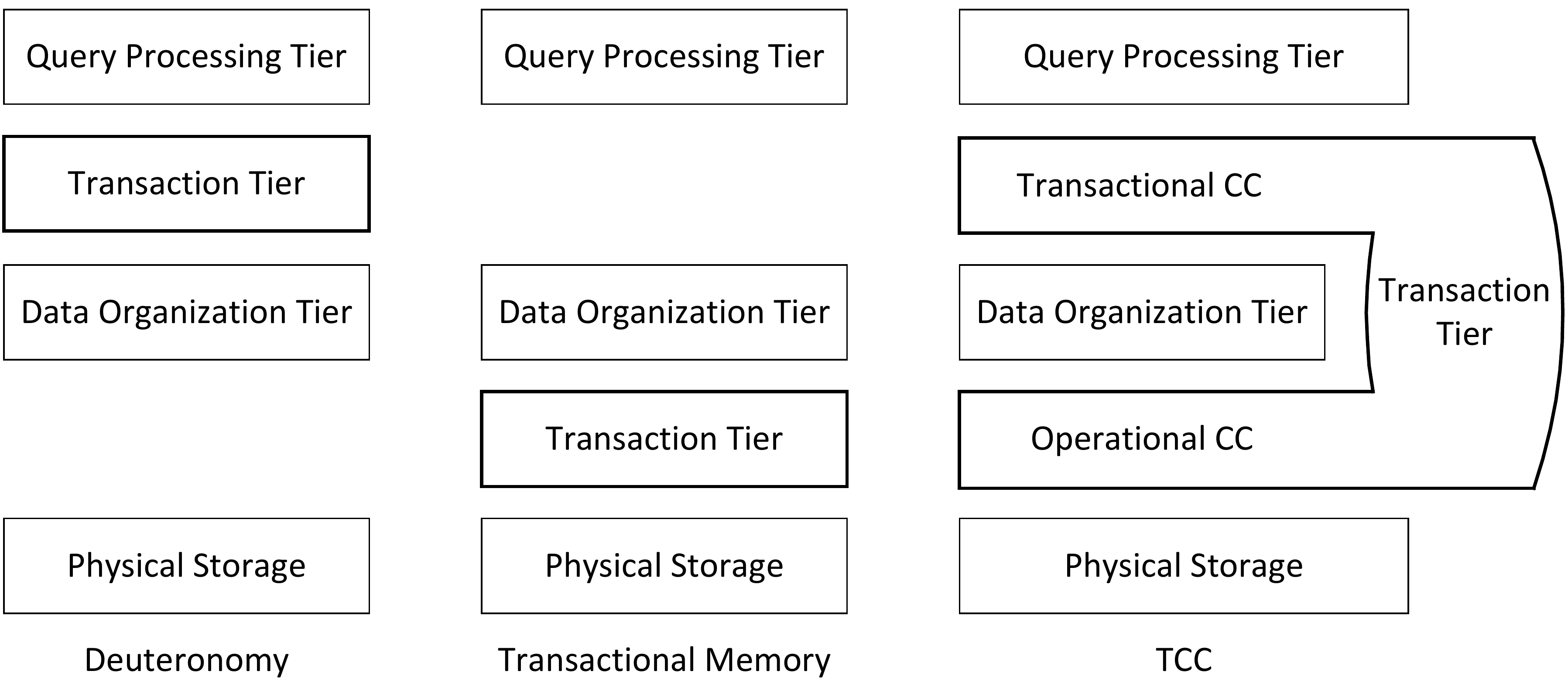}
\vspace{-0cm}
\caption{\small Possible Placements of the Transaction Tier}
\vspace{0cm}
\label{fig:arch}
\end{figure}

\begin{figure}
\centering
\includegraphics[width=.45\textwidth]{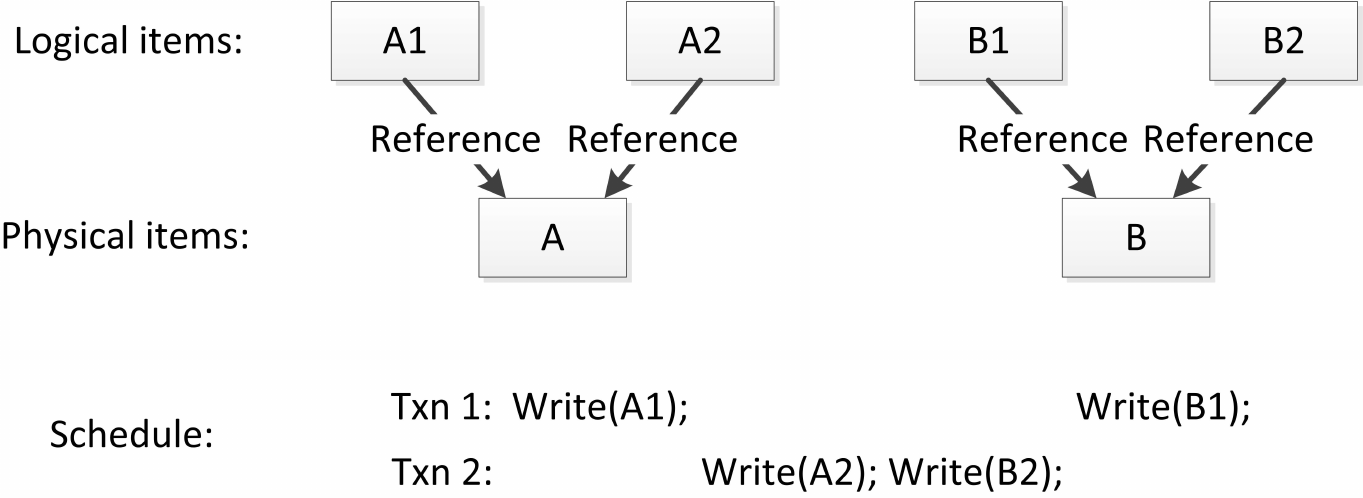}
\vspace{-0cm}
\caption{\small When logical items share physical data, serializablity cannot be ensured at the logical level alone. (As the transaction manager does not know that A1 and A2 or B1 and B2 refer to the same piece of data, it regards the above schedule serializable.)}
\vspace{-0cm}
\label{fig:coreference}
\end{figure}

By contrast, TCC expects to separate CC completely from the rest of the system.
As shown on the right of Figure~\ref{fig:arch}, TCC places an extra transaction tier between the data organization tier and the physical storage. This allows it to delegate the work of CC completely to the transaction tier.

It is not new to perform transaction management directly on the physical storage.
Transactional Memory (TM) is based on the same idea. TM provides transactional support on shared memory, in order to ease programmers' work on data synchronization. In recent years, TM has been a focus of intensive research \cite{herlihy1993transactional,cascaval2008software}, resulting in a number of hardware based and software based implementations (a.k.a. HTM and STM). Some recent work \cite{leis2014exploiting,cervini2015applying} has explored how to utilize HTM in database systems.
According to their study, due to the constraints imposed by hardware, HTM cannot be directly applied to database transactions. This limits its usage in a generic database system. STM is believed to incur high overheads \cite{cascaval2008software}, as it requires extra computation to perform concurrency control. In~\cite{osdi_06}, a ``transactional storage" was proposed to transactionalize block-addressable storage. However, the work is focused on the functionality of persistence and recovery.

The major issue faced by both HTM and STM is their lack of adaptability. TMs normally employ generic CC mechanisms, mostly OCC, which are not universally applicable to all programs of data manipulation. There are always corner cases~\cite{htm_corner_case}, in which they fail to perform. This is unacceptable to TCC. As TCC is supposed to be transparent, developers of the rest of the system should be allowed to implement any data manipulation method, without concerning any performance corner case. TCC deals with the adaptability issue through two approaches. On the one hand, its operational scheduler is able to learn from errors. This makes it eventually adaptable to any program of data manipulation. On the other hand, it provides interfaces for developers to input knowledge about data semantics, which can by utilized by its transactional scheduler to improve performance.

\vspace{-0.1cm}
\section{The Architecture}
\vspace{-0.0cm}
It is a common practice to decompose a database system into three tiers -- a query processing tier, a data organization tier and a physical storage tier~\cite{postgres_interpreter}. The query processing tier transforms a SQL query into a query plan and evaluates the plan by invoking relational operators, such as table scan, hash join, etc. The data organization tier is responsible for storing and maintaining structured data. It exposes interfaces of high-level data access to upper tiers, such as index lookup, tuple insertion, tuple update, etc. We call them \emph{data operations} or \emph{operations}. The physical storage tier exposes interfaces of low-level data access, such as read and write of data blocks. We call them \emph{r/w actions} or \emph{actions}.

In a traditional DBMS, the module of concurrency control is tightly integrated within the data organization tier.
Intuitively, the module functions at two levels. At the finer level, it schedules the actions enclosed in each data operation, to ensure atomicity of data operations. At the coarser level, it schedules the data operations, to enforce a certain level of isolation among transactions. For example, in MySQL, the implementation of B-tree involves both latches and locks~\cite{mohan_b_tree}. Latches enforce isolation among B-tree operations, such as lookup, insertion and deletion. Locks enforce isolation among transactions, each of which may involve multiple b-tree operations.

To separate the module of transaction management from the rest of the system, we are faced with three options. As Figure~\ref{fig:arch} illustrates, the first choice is to place the transaction tier above the data organization tier. This is the architecture adopted by Deuteronomy~\cite{lomet2009unbundling,lomet2009locking}. As mentioned earlier, in this architecture, the data organization tier itself will be responsible for performing CC among data operations.

The second choice is to place the transaction tier below the data organization tier. The transaction manager regards each transaction as a sequence of r/w actions on data blocks. If a DBMS relies on transactional memory / storage \cite{leis2014exploiting,htm_db} alone to implement its CC mechanism, it basically adopts this architecture. As this architecture enables a complete separation of the CC mechanism, we treat it as a baseline approach of TCC.
However, in this architecture, as the transaction tier lacks the knowledge about data organization, it is faced with severe performance issues. (Details about these issues will be elaborated in Section~\ref{sec:semantic_gap}.)

TCC adopts the third architecture (on the right of Figure~\ref{fig:arch}). It splits the transaction module into two tiers, and places one above and one below the data organization tier. We call the upper one the transactional CC tier and the lower one operational CC tier. They enforce isolation among transactions and data operations respectively.


As a result, the architecture of TCC consists of five tiers:

\textbf{Query Processing Tier:}
This tier interprets and executes
SQL queries. During the execution, it will invoke data operations offered by the data organization tier.

\textbf{Transactional CC Tier:}
This tier regards each transaction as a sequence of data operations, such as index lookup, tuple insertion, etc.
With the full knowledge about conflicts among data operations, it is able to schedule transactions to meet a desired isolation level, such as serializability.

\textbf{Data Organization Tier:}
This tier keeps the data organized in predefined structures, such as relational tables, B-tree indexes, etc.
It implements basic data operations, such as index lookup, tuple insertion, tuple update, table scan, etc.
In this tier, a data operation is further translated into a sequence of r/w actions on the physical storage.

\textbf{Operational CC Tier:}
This tier regards each data operation as a sequence of r/w actions, and employs a CC mechanism to ensure the serializability of data operations.

\textbf{Physical Storage Tier:} This tier executes r/w actions on the physical storage. In this paper, we assume that the database system uses block addressable storage. Therefore, the granularity of each r/w action is at the level of data blocks. We also assume that
each r/w action is atomic.
Should a DBMS employ a buffer manager to speedup data access, the buffer must
be located at this tier.

The interfaces exposed by the CC tiers are as follows:

\begin{enumerate}[nosep]
\item \emph{beginTx(int~tx\_id)}
This interface is invoked to start a transaction. The transaction has a unique identifier
\emph{tx\_id}.
The interface is provided by the transactional CC tier. It is supposed to be invoked by applications.

\item \emph{endTx(int~tx\_id)}
This interface is invoked to finish a transaction identified by \emph{tx\_id}.
It is also provided by the transactional CC tier and invoked by applications.
When a transaction ends, it either commits or aborts, depending on whether it violates the predefined isolation level.

\item \emph{abortTx(int~tx\_id)}
This interface is invoked by applications to abort a transaction identified by \emph{tx\_id}. It is provided by the transactional CC tier too.

\item \emph{beginOp(int~tx\_id, int~op\_id)}
This interface is provided by the operational CC tier.
It is invoked by the transactional CC tier before a data operation is invoked, to indicate the beginning of a data operation. We use \emph{tx\_id} to denote the identifier of the host transaction, and \emph{op\_id} to denote the identifier of the data operation.

\item \emph{endOp(int~tx\_id, int~op\_id)}
This interface is also provided by the operational CC tier.
This interface is invoked after a data operation finishes, to end the data operation identified by \emph{op\_id}. An operation may succeed or fail, depending on correctness of
its schedule.

\item \emph{read(int~tx\_id, int~op\_id, long~block\_id, char$*$~buf)}
The data organization tier invokes this interface to read the data block identified by \emph{block\_id}. Upon the invocation, the physical storage tier will copy the data in the block into the buffer \emph{buf} refers to.

\item \emph{write(int~tx\_id, int~op\_id, long~block\_id, char$*$~data)}
This interface is invoked to copy the \emph{data} into the block identified by \emph{block\_id} in the physical storage.
As calls of \emph{read} and \emph{write} all go through the operational CC tier, they are subject to the scheduling of the CC tier.
\end{enumerate}

Figure~\ref{fig:trans} illustrates the usage of the above interfaces. Suppose that the application submits a transaction to insert an entry into a table. Suppose that there is a B-tree index on the table. The application uses \emph{beginTx} and \emph{endTx} to specify the beginning and end of the transaction. The query processing tier transforms the SQL statement into two data operations in the data organization tier -- one inserts an entry into the B-tree and the other inserts a tuple into the table. The transactional CC tier encloses each data operation within a pair of \emph{beginOp} and \emph{endOp} calls. Between the two calls, the data organization tier invokes \emph{read} and \emph{write} interfaces to manipulate the data in the physical storage.

Such a design decouples CC from data organization tier completely. On the one hand, the CC tiers need not to care about how data is organized and processed. On the other hand, the data organization tier only needs to encapsulate data manipulation into data operations and invoke the read and write interfaces to access data in physical storage. It does not need to know the logic of CC mechanisms.

\begin{figure}
\centering
\includegraphics[width=.47\textwidth]{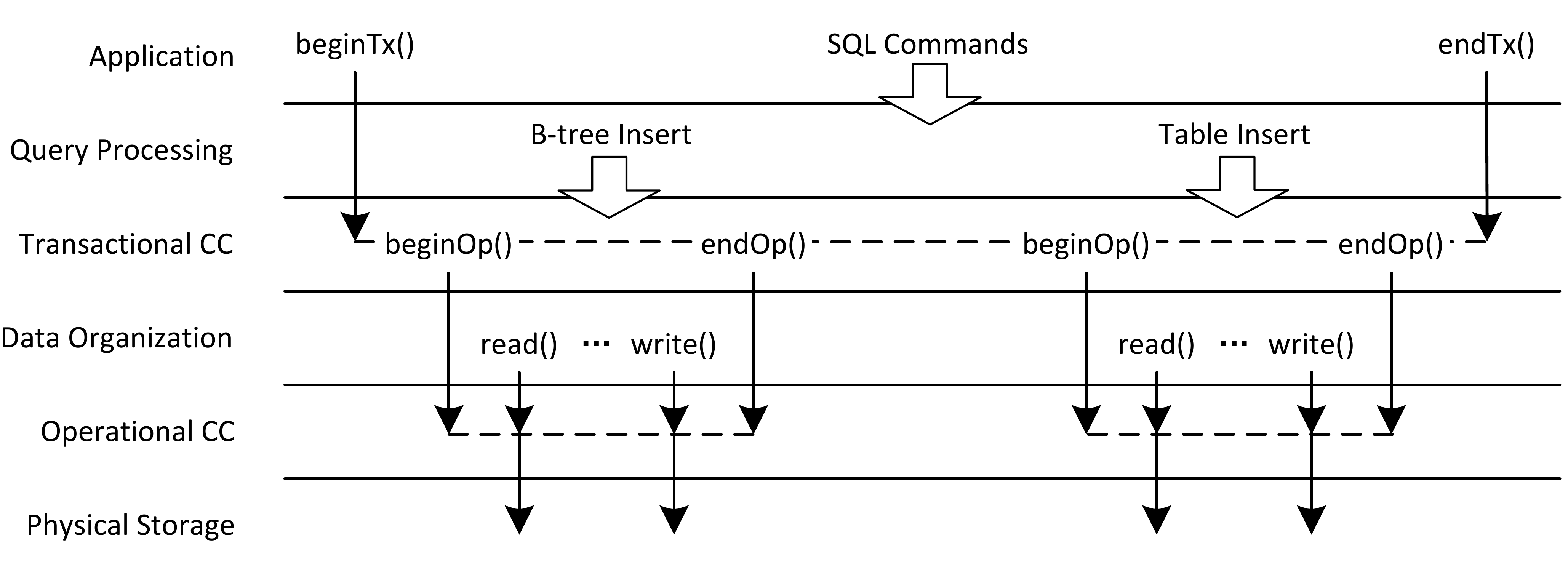}
\vspace{-0.2cm}
\caption{\small How the TCC Architecture Processes a Transaction}
\vspace{-0.2cm}
\label{fig:trans}
\end{figure}

A transaction module needs to deal with both concurrency control and recovery. In this paper, we focus on concurrency control. The function of recovery can be realized through a conventional page-level WAL mechanism. Due to space limitation, we do not further elaborate on it.

\vspace{-0.1cm}
\section{Correctness of TCC}
\vspace{-0.0cm}
\label{sec:correct}

In this paper, we consider only the isolation level of serializability. We show that TCC is able to enforce serializability.

\vspace{-0.1cm}
\subsection{Enforcement of Conflict Serializability}
\vspace{-0.0cm}

Conventional DBMSes treat serializability narrowly as \emph{conflict serializability}.
Enforcement of conflict serializability requires knowledge about conflicts among transactions.
As transactions are composed of data operations, it actually requires that
the CC layer should observe all conflicts among data operations.

Most textbooks on transaction management discuss only the conflicts among simple read and write operations. (By read and write operations, we refer to read and write of data objects rather than r/w actions on physical storage.) They create an illusion that conflict serializability can be enforced by simply locking data objects. In fact, data operations in real-world systems are of much higher complexity. Consider operations such as insertion/deletion of a data object, scan of an entire table, etc.
To capture conflicts among complex data operations, traditional DBMSes employ a variety of advanced locking mechanisms, such as key range locks, intention locks, predicate locks, etc.

Due to the separation, TCC 
is deprived of the options of using advanced locking mechanisms, such as predicate locks.
It has to infer conflicts among data operations based on their low-level actions on physical storage.
That is, it regards two data operations conflict, if and only if their r/w actions on the physical storage conflict. This approach greatly simplifies the CC mechanism. Meanwhile, it mandates the following prerequisite.

\begin{prerequisite}
\vspace{-0cm}
\label{pre:complete}
The information in the physical storage is complete and exclusive, such that the results of any sequence of data operations are exclusively determined by the state of the physical storage.
\vspace{-0cm}
\end{prerequisite}

Prerequisite~\ref{pre:complete} insists that all data and metadata should be stored in the physical storage. If any data or metadata is stored elsewhere, TCC may fail to capture the conflicts on this part of data. While this prerequisite appears trivial, system engineers 
must bear this prerequisite in mind, to prevent TCC from malfunctioning. For example, buffers must be placed within the physical storage layer, so that data accesses to the buffers are observable to TCC; data or metadata cannot be transmitted among data operations through shared variables, which TCC is unaware of.


\begin{theorem}
\vspace{-0cm}
\label{theorem:op_conflict}
Under Prerequisite~\ref{pre:complete}, two data operations conflict only if their r/w actions conflict.
\vspace{-0cm}
\end{theorem}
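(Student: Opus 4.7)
The plan is to establish the contrapositive: if no r/w action in $O_1$ conflicts with any r/w action in $O_2$, then $O_1$ and $O_2$ are non-conflicting as data operations. Recall that two data operations are said to conflict when their relative execution order can alter observable results (return values and the post-execution state of the system). So I need to show that, under the assumed r/w-level commutativity, swapping the order of $O_1$ and $O_2$ leaves every observable outcome unchanged.

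First I would reduce the question entirely to the physical storage layer. By Prerequisite~\ref{pre:complete}, the data and metadata touched by any operation reside exclusively in the physical storage; consequently, the return value of an operation is a pure function of the reads it performs, and its effect on the world is captured completely by the writes it issues. Thus to compare the two execution orders $O_1 O_2$ and $O_2 O_1$ it suffices to compare (i) the values read by each r/w action and (ii) the final state of the storage.

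Next I would invoke the classical read/write commutativity argument. Expand $O_1$ and $O_2$ into their constituent r/w action sequences $\alpha_1,\dots,\alpha_m$ and $\beta_1,\dots,\beta_n$. The concatenation $\alpha_1\cdots\alpha_m\beta_1\cdots\beta_n$ can be transformed into $\beta_1\cdots\beta_n\alpha_1\cdots\alpha_m$ by a finite sequence of adjacent swaps, each swap exchanging some $\alpha_i$ with some $\beta_j$. By hypothesis no $\alpha_i$ conflicts with any $\beta_j$ at the r/w level, so each swap preserves the value returned by every read and the value left in every block. A straightforward induction on the number of swaps then shows that both orderings produce identical sequences of returned values and identical final storage states.

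Finally I would lift this conclusion back to the operational level: since the reads inside each $O_k$ return the same values in both orderings, each $O_k$ produces the same return value in both orderings; and since the final storage state is identical, by Prerequisite~\ref{pre:complete} any subsequent data operation would also behave identically. Hence $O_1$ and $O_2$ do not conflict as data operations, which is the contrapositive of the claim. The main obstacle I anticipate is the bookkeeping in the swap argument when operations are long; but because the prerequisite rules out any hidden side channel (shared variables, external caches, etc.), the argument really does reduce to the standard adjacent-swap lemma for read/write histories, so no subtlety beyond careful statement should be required.
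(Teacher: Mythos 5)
Your proof is correct and is essentially the contrapositive form of the paper's own argument: both hinge on the observation that non-conflicting r/w sequences drive the physical storage to the same final state, after which Prerequisite~\ref{pre:complete} forces all observable results to agree. You additionally spell out two steps the paper leaves implicit --- the adjacent-swap commutation showing that the two interleavings yield the same read values and final state, and the fact that the return values of the two operations themselves (not only those of the continuation $P$) are unchanged --- which makes your version slightly more complete than the published one.
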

\begin{proof}
The proof is by contradiction. We assume that two data operations $o_1$ and $o_2$ conflict while their r/w operations do not conflict. Let $S_1$ and $S_2$ be the sequences of r/w operations of $o_1$ and $o_2$ respectively. As $o_1$ and $o_2$ conflict, there must be a sequence of operations $P$, such that the concatenated sequences $o_1o_2P$ and $o_2o_1P$ will yield different results. As $S_1$ and $S_2$ do not conflict, $S_1S_2$ and $S_2S_1$ must transfer the physical storage to the same state. Thus, we can conclude that the results of $P$ is not determined by the physical storage. This contradicts Prerequisite~\ref{pre:complete}.
\end{proof}

Theorem~\ref{theorem:op_conflict} states that TCC can capture all conflicts among data operations by observing the r/w actions.
This is sufficient for TCC to enforce conflict serializability.
In TCC, the operational CC tier is responsible for ensuring serializability among data operations,
and the transactional CC tier is responsible for ensuring serializability among transactions.
Generic CC mechanisms, such as 2PL, SSI and OCC, can be employed for the enforcement.

\vspace{-0.1cm}
\subsection{Beyond Conflict Serializability}
\vspace{-0.0cm}

Inferring operational conflicts at the physical level can be overkill.
In fact, when two r/w actions conflict on the physical storage, it is not necessary that their host data operations semantically conflict. For instance,
we can increment a counter twice, through two data operations. Physically the two operations conflict, as they modify the same piece of physical data. In effect, they do not, as they can be reordered without affecting the results.
As elaborated subsequently, 
conflict serializability at the level of physical storage will limit TCC's concurrency.
This issue is less serious to traditional DBMSes, as they detect conflict at the semantic level (the level of data objects), which helps them circumvent the worst cases.
To achieve good performance, TCC needs to go beyond conflict serializability.

In this paper, we consider View Serializability (VS), a less restrictive definition of serializability. As the traditional definition of VS considers only read and write data operations, we redefine it as follows, to make it applicable to general data operations.

\begin{definition}[View Equivalence]
\vspace{-0cm}
Two schedules $S$ and $S'$ of the same set of data operations are View Equivalent, if for all possible sequences of operations $A$ and $P$, the return values of the data operations in the concatenated sequence $ASP$ are identical to those in the sequence $AS'P$.
\vspace{-0cm}
\end{definition}

View equivalence requires not only that two schedules return the same results, but also that their subsequent operations (those of $P$) return the same results. That is, the two schedule should transform a database to the same state.
Two states of a database are semantically identical if they always return the same result to the same operation. They are not necessarily byte-to-byte identical in physical forms. For instance, in classical relational theory, two relational tables are equivalent, if they contain the same set of tuples, even though their tuples are stored in different orders.

\begin{definition}[View Serializability]
\vspace{-0cm}
Given a set of transactions $T$, a schedule $S$ is View Serializable, iff there exists a serial schedule $S'$ of $T$, such that $S$ and $S'$ are View Equivalent.
\vspace{-0cm}
\end{definition}

It is not difficult to prove that a conflict-serializable schedule is also view-serializable.
To harness the benefits of view serializability, TCC allows system developers to specify the conditions under which view serializability can be preserved, especially when conflict serializability is violated. For instance, the developer of B-tree can declare that two B-tree insertions are commutative, which means that the order of insertion has no impact on serializability. As a result, TCC no longer needs to consider the conflicts among B-tree insertions, even though they have modified the same 
data blocks.

\vspace{-0.1cm}
\section{Where does Performance Drop}
\vspace{-0.05cm}
\label{sec:two_issues}

Our goal is to optimize the performance of TCC, so that it can be an alternative to traditional CC mechanisms.

A performance issue one can easily think of is the granularity of CC.  As TCC operates at the block level, when data accesses are concentrated on a small number of blocks, the throughput may drop quickly.
In fact, this issue is not as serious as we expect.
In our experimental evaluation, we found that the granularity issue only occurs in a limited number of cases.
We leave the granularity issue to engineers of the data organization tier, who are supposed to keep hotspot data decentralized, and treat it as a principle of design. (This does not necessarily mean that we should sacrifice data locality. Hotspot data is a small amount of highly contended data. Even if we scatter the data on multiple blocks, they can still be accommodated by caches.)

A more serious challenge faced by TCC is information loss. Once the CC layer is separated from the rest of the system, the structures of data and the system's behaviorial patterns are no longer explicit to the CC mechanism. This may lead to serious performance degradation, as specialized designs cannot be adopted. We classify the issues of information loss into two categories -- \emph{predictability gap} and \emph{semantic gap}, and elaborate on them separately.

\vspace{-0.1cm}
\subsection{Predictability Gap}
\vspace{-0.0cm}
\label{sec:predict_gap}
There are limited types of data operations in a DBMS, which are repeatedly invoked to complete complex data processing.
As a result, there is a strong regularity in data accesses on the low-level storage. Such regularity has been utilized by traditional CC mechanisms to enhance performance.
For example, when performing B-tree insertion, if a leaf node is retrieved, it is guaranteed to be updated subsequently.
In MySQL, when a normal operation attempts to read a leaf node of a B-tree, it will place a shared latch on the node to allow more concurrency.
However, if the operation is a B-tree insertion, MySQL will place an exclusive latch on the leaf node upfront. This helps it avoid latch upgrade, which can easily lead to deadlocks (Figure~\ref{fig:btree_pattern}).

It is difficult for TCC to utilize such regular patterns in data accesses. When a B-tree insertion is reading a leaf node, TCC knows neither that it is a B-tree insertion nor that the block being accessed is a leaf node. It is then impossible for TCC to predict that there will be a follow-up modification. If TCC adopts a conventional CC mechanism, such as 2PL, B-tree insertion has to perform latch upgrade. As Figure~\ref{fig:btree_pattern} illustrates, if multiple B-tree insertions attempt to access the same leaf node concurrently, deadlock will be highly likely. To make the matter worse, if we retry the B-tree insertions whenever encountering a deadlock, it will incur more deadlocks or even starvation. The entire system may stop performing because of it.

\begin{figure}
\centering
\includegraphics[width=.35\textwidth]{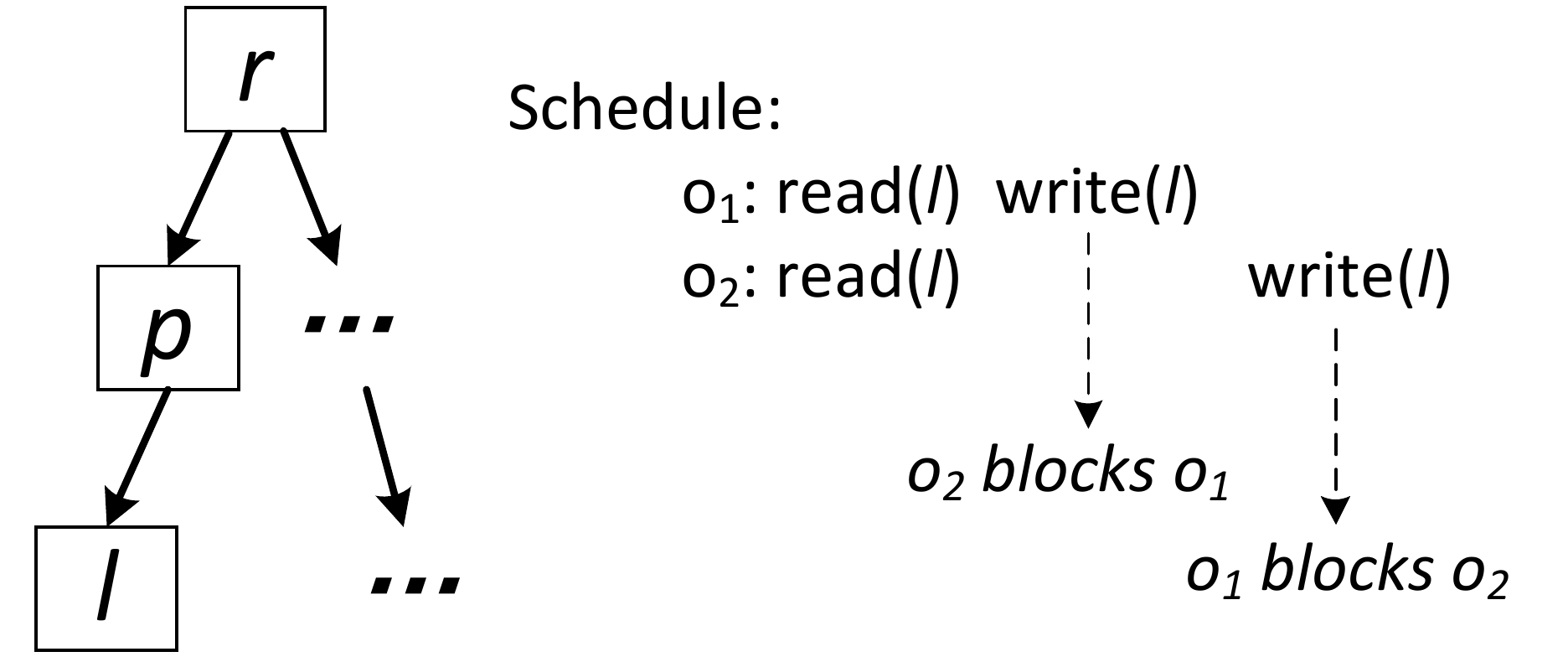}
\vspace{-0.2cm}
\caption{\small Data access sequences on B-tree that cause deadlock or abort.}
\label{fig:btree_pattern}
\vspace{-0.2cm}
\end{figure}

Without the knowledge about how each data operation works, TCC loses the ability to predict data operations' behaviors. Thus, it misses the opportunity to apply specialized mechanisms to improve the performance of CC.
We call this type of information loss ``predictability gap''.

To the best of our knowledge, all generic CC mechanisms that have existed suffer from predictability gap.
Figure~\ref{fig:general_pattern} illustrates a corner case no generic CC mechanisms can deal with, be it either 2PL or OCC. In this case, two concurrent data operations $o_1$ and $o_2$ update a sequence of data blocks in reverse orders. All generic CC mechanisms will allow $o_1$ and $o_2$ to update $p_1$ and $p_n$ concurrently.
This will surely lead to deadlock or abort. If $o_1$ and $o_2$ are invoked frequently, there will be performance degradation.
It is unacceptable that TCC be handicapped by such corner cases. However, we cannot resort to specialization, as we still need to hide the implementation details of data operations from TCC.
The only option left to us is to design a generic CC mechanism that is immune to predictability gap.

\begin{figure}
\centering
\includegraphics[width=.3\textwidth]{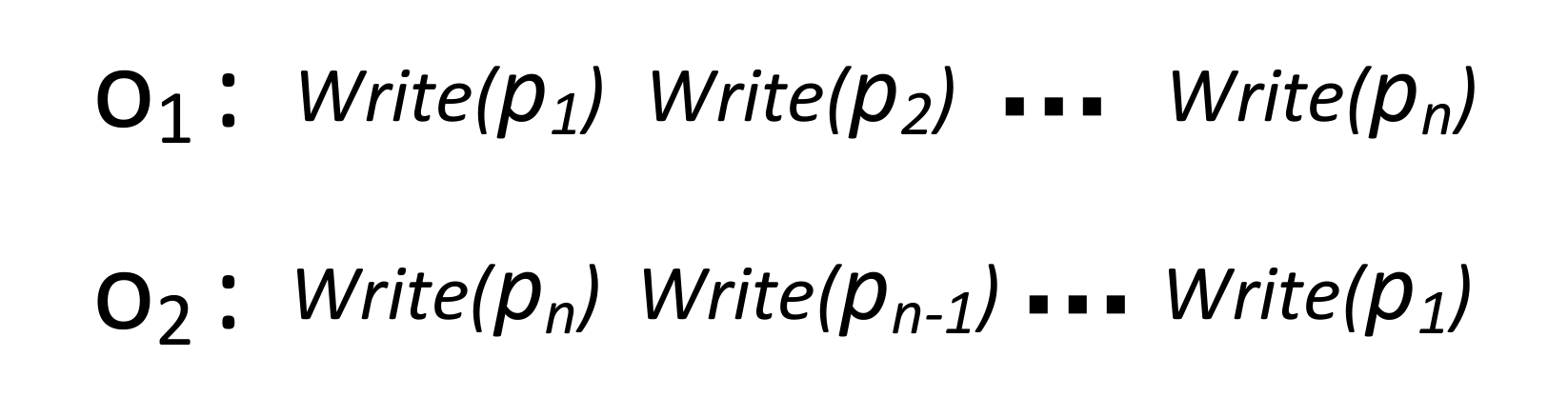}
\vspace{-0.2cm}
\caption{\small Data access sequences that embarrass all general-purpose CC mechanisms}
\label{fig:general_pattern}
\vspace{-0.4cm}
\end{figure}

In section~\ref{sec:op_level}, we will introduce a new CC mechanism, which can learn access patterns in a try-and-error manner. When performing or retrying a data operation, it acquires knowledge about its data access patterns. Then, it can utilize the knowledge in the subsequent retries. It proves to be robust against any corner cases.

\vspace{-0.1cm}
\subsection{Semantic Gap}
\vspace{-0.0cm}
\label{sec:semantic_gap}

We have mentioned that conflict serializability at the level of physical storage is too restrict for TCC to achieve good performance.
As a makeup, we introduced view serializability, which is based on the definition of view equivalence. View equivalence, in turn, is a semantic measure.
Its measurement requires the semantics of data operations, which we intend to hide from TCC.

For example, commutative operations and inverse operations~\cite{weikum2001transactional} are common semantics we can use to measure view equivalence. 
Suppose that transaction $T_1$ performs two B-tree insertions $o_1$ and $o_2$, and transaction $T_2$ performs one B-tree insertion $o_3$, all on the same leaf node $l$.
If we enforce conflict serializability restrictively, we can accept only two schedules of the operations, namely $[o_1, o_2, o_3]$ and $[o_3, o_1, o_2]$.
In fact, most real-world DBMSes accept the schedule $[o_1, o_3, o_2]$ too, simply because B-tree insertions are commutative. While it is possible that the two versions of $l$ resulted from $[o_1, o_2, o_3]$ and $[o_1, o_3, o_2]$ are not physically identical, they are view equivalent -- they are semantically identical to future data operations.

View serializability allows us to exploit more concurrency. However, TCC lacks the knowledge to judge view seraializability. This is known as ``semantic gap".

To deal with semantic gaps, we place a transactional CC tier atop the data organization tier.
It allows system engineers to explicitly declare semantic relationship between data operations (e.g., commutative operations, inverse operations).
Section~\ref{sec:tx_level} describes how TCC leverages data semantics to generate view serializable schedules.

\vspace{-0.1cm}
\section{The TCC Mechanism}
\vspace{-0.0cm}
\label{sec:solution}

The two-level architecture of TCC allows us to deal with the two information gaps separately. The operational tier deals with the predictability gap by adopting a try-and-error CC mechanism. To bridge the semantic gap, the transactional tier allows developers to declare semantic relationship among data operations.

\vspace{-0.1cm}
\subsection{Operational Scheduler}
\vspace{-0.0cm}
\label{sec:op_level}

Our scheduler at the operational level employs latching to enforce serializability of data operations.
The basic approach is two-phase latching -- an operation places latches when it is about to read or write a data block for the first time; it releases all the acquired latches after the operations. The scheduler fails an operation, if it suspects that it may violate serializability.
When an operation fails, the scheduler retries it immediately.
During a retry, it performs early latching to prevent the operation from failing again for the same reason. When an operation fails more, more early latches will be placed, so that the chance of a successful retry gradually increases. 

This try-and-error approach allows the scheduler to learn the behaviorial pattern of a data operation on the fly. As more retries are performed, the behavior of an operation becomes increasingly predictable. At a certain point, we can guarantee that the scheduler is able to complete the operation without further retry. To make this intuition work, we introduce the concept of progressiveness.

\begin{definition}[Progressiveness]
\vspace{-0cm}
\label{def:progress}
Let a data operation be a sequence of r/w actions. A scheduler is progressive if it can guarantee: whenever a data operation fails on an r/w action (i.e., the data operation is aborted because of a conflict on the action), the subsequent retries of the operation will not fail on the same r/w action again.
\vspace{-0cm}
\end{definition}

Progressiveness ensures that each r/w action of a data operation will fail at most once. If a data operation comprises $n$ r/w actions, it will fail at most $n$ times.
Therefore, a progressive scheduler guarantees to complete any data operation in a limited number of retries, no matter how complicated the situation is. Progressiveness means robustness.

To ensure progressiveness, the operational scheduler needs to think twice before deciding to fail an operation, as it cannot fail it on the same r/w action for more than once.
Suppose that two data operations $o_i$ and $o_j$ conflict. Then, there must be two r/w actions, $a_i$ of $o_i$ and $a_j$ of $o_j$, which attempt to access the same data block. Suppose that $a_i$ is ahead of $a_j$. We can distinguish among three types of situations:
\begin{itemize}[noitemsep]
\item[I.]  $o_i$ and $o_j$ have already failed on $a_i$ and $a_j$, in the previous attempts.
\item[II.] $o_i$ has never failed on $a_i$, while $o_j$ have failed on $a_j$.
\item[III.] $o_j$ has never failed on $a_j$.
\end{itemize}
To ensure progressiveness, in Situation I, we cannot abort either $o_i$ or $o_j$. In Situation II, we cannot abort $o_j$. In Situation III, it is always safer to abort $o_j$ rather than $o_i$.
Based on this observation, we come up with the following rules for our operational scheduler:
\begin{itemize}[nosep]
\item[1.] {\bf Basic Latching}. Whenever an operation $o$ conducts an r/w action $\langle p, m \rangle$ (where $p$ denotes the data block being accessed, and $m$ denotes the access mode, i.e., read or write), it is supposed to place a latch of mode $m$ on $p$. The latches will be held until $o$ succeeds or fails. This is basically two-phase latching, which ensures serializability among data operations.
\item[2.] {\bf Early Latching}. To deal with Situations I and II, we perform early latching.
    Whenever a data operation $o$ fails on an r/w action $\langle p, m \rangle$ for the first time, $o$ will record $\langle p, m \rangle$ in an immunity set $S_o$. When $o$ retries, it latches the blocks in its immunity set in advance. That is, for each $\langle p, m \rangle$ in $S_o$, $o$ will first place a latch of mode $m$ on $p$ before the execution starts.
    To avoid deadlocks in the early-latching phase: (1) we place latches in the order of block ids; (2) if a data operation will both read and write a block, we only place the \emph{write} latch.
    When early latching is in use, in Situations I and II, $o_i$ and $o_j$ will actually be executed in a serial order, as $o_j$ will be blocked by $o_i$ in the early-latching phase. Then, we can avoid aborting $o_i$ and $o_j$ on $a_i$ and $a_j$.
\item[3.] {\bf Early Abortion}. To deal with Situation III, we ensure that $o_j$, instead of $o_i$, is the one to abort.
    When a data operation $o$ performs an r/w action $\langle p, m \rangle$, if $o$ did not fail on the r/w action before, it will try to latch $p$ before the action. In this case, if another operation has already obtained the latch on $p$, instead of blocking $o$, we abort $o$ directly.
\end{itemize}

A scheduler following the above three rules will be deadlock free. Due to the use of early abortion, blocking can only occur in the early latching phase. As early latching is performed in a universal order, the aforementioned three rules alone cannot cause deadlock.
The following theorems confirm that our scheduler achieves serializability and progressiveness simultaneously.

\begin{theorem}
\label{theorem:op_schedule1}
If we perform scheduling by following Rules 1, 2 and 3, all data operations will be serializable.
\end{theorem}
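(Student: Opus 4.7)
The plan is to reduce the theorem to the classical result that two-phase latching, applied within each successful attempt of a data operation, makes the resulting history of committed operations conflict-serializable. The first step is to show that every successful attempt of a data operation $o$ obeys strict two-phase latching: by Rules~1 and~2, every latch that $o$ holds is acquired either at the start of the attempt (the early latches drawn from the immunity set $S_o$) or at the moment of the corresponding r/w action, and no latch is released until $o$ terminates. Hence each successful attempt has a monotone growing phase followed by a single release point, which is exactly the 2PL discipline.

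Next, I would argue that failed attempts can be excluded from the analysis: whenever an attempt is aborted under Rule~1 or~3, all of its latches are released and its writes are undone by the page-level WAL mechanism mentioned at the end of Section~3. Only the terminating attempt of each operation contributes to the committed history, so serializability needs to be checked only on that history.

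Third, I would combine Rules~1 and~3 to prove conflict-serializability of the committed history. Suppose committed operations $o_i$ and $o_j$ issue conflicting r/w actions on some block $p$, with $o_i$'s action appearing before $o_j$'s. By Rule~1, at the time of its action $o_i$ already holds a conflicting-mode latch on $p$, and it retains the latch until termination. When $o_j$ later tries to access $p$, it either waits for the latch inside its own early-latching phase (and is blocked until $o_i$ finishes) or tries to acquire the latch on the fly, in which case Rule~3 would abort $o_j$---contradicting the assumption that this is $o_j$'s successful attempt. Either way, $o_j$'s conflicting action happens after $o_i$ terminates, so every conflict edge is consistent with the real-time ``terminates-before'' order, which is acyclic. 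Conflict-serializability follows, and by Theorem~\ref{theorem:op_conflict} this is sufficient for serializability at the data-operation level.

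The main obstacle I anticipate is the bookkeeping around the interaction of Rule~3 with early latching. I must verify that the immunity set only \textbf{adds} latch requests to a retry rather than replacing or skipping any, so that no conflicting r/w action ever bypasses a latch held by a still-running peer; and I must exploit the universal block-id order mandated in Rule~2 to rule out deadlocks within the early-latching phase itself, so that the 2PL growing phase is actually realizable. With these two observations in place, the classical 2PL-to-serializability argument goes through verbatim.
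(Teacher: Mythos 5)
Your proof is correct and follows essentially the same route as the paper: both arguments observe that any conflict between operations that complete successfully must fall under Situation I or II, so the later operation cannot touch the contended block (indeed, any data, in the early-latching case) until the earlier one terminates, which makes every dependency edge respect the acyclic completion order --- the paper merely states this in contrapositive form, deriving a deadlock from a hypothetical dependency cycle. One small correction: failed attempts leave no trace because each operation writes only to a private workspace that is discarded on failure (see the execution phase of Algorithm~\ref{alg:op_exec}), not because of the page-level WAL mechanism; this does not affect the validity of your argument.
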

\begin{proof}
The proof is by contradiction. If we assume that serializability does not hold, there must be a dependency cycle $o_1 \rightarrow o_2 \rightarrow \cdots \rightarrow o_n \rightarrow o_1$,  where $o_1, o_2, ..., o_n$ all complete successfully. For each dependency $o_i \rightarrow o_j$ in the cycle, we can conclude that it is not in Situation III. Otherwise, $o_j$ will abort. Then, $o_i \rightarrow o_j$ can only be in Situation I or Situation II. In either case, $o_j$ will not access any data until $o_i$ completes.
Then, there will be a deadlock among the operations $o_1 \cdots o_n$, as each operation is waiting for the preceding one to complete. Then, no operation can complete.
\end{proof}

\begin{theorem}
\label{theorem:op_schedule2}
If a scheduler follows Rules 1, 2 and 3 exactly (except the actions specified in Rules 1, 2 and 3, no other blocking or abortion is performed), it is a progressive scheduler.
\end{theorem}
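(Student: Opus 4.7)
The plan is to argue directly from the definition of progressiveness: if an operation $o$ aborts on an r/w action $\langle p, m\rangle$, I want to show that on any subsequent retry $o$ cannot abort again on $\langle p, m\rangle$. The intuition is that Rule~2 records $\langle p, m\rangle$ in the immunity set $S_o$, so on the retry the matching latch is acquired in the early-latching phase, and from then on $o$ cannot be prevented from performing that action.

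First I would trace what happens on the retry. By Rule~2, the retry begins by acquiring, in block-id order, a latch of mode $m$ on $p$ (or a write latch, if $o$ is known to both read and write $p$). The acquisition is blocking, not aborting, so $o$ eventually owns a latch at least as strong as $m$ on $p$. Now I would examine the moment $o$ re-executes the action $\langle p, m\rangle$ during the body of the operation. Rule~1's latch request is already satisfied, since $o$ already holds a sufficient latch. Rule~3's early abortion clause applies only when ``$o$ did not fail on the r/w action before,'' which is false here, so Rule~3 cannot abort $o$ at this step. By hypothesis of the theorem, no other rule can produce blocking or abortion. Hence the action completes, establishing progressiveness.

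The main obstacle is justifying that the early-latching phase itself always terminates; if it could block $o$ forever, the ``retry'' would never reach the action in question. I would handle this by a two-part argument. (i) During the execution phase, no operation ever blocks on a latch: if the latch is already owned from early latching there is nothing to wait for, and otherwise Rule~3 aborts the requester. Thus any latch held by an executing operation is released in bounded time (when the operation commits or aborts). (ii) Among operations in the early-latching phase, latch requests are issued in a single global order (by block id), so the standard ordered-acquisition argument rules out deadlock. Combining (i) and (ii), the waits-for graph is acyclic and its sinks always make progress, so $o$ eventually finishes early latching and proceeds as above.

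A minor technical point worth flagging in the write-up is the ``upgrade-to-write'' rule in Rule~2: if $o$ failed on $\langle p,\mathrm{read}\rangle$ but will also write $p$, early latching installs a write latch, which dominates the needed read latch and so preserves the argument. Similarly, if $\langle p,\mathrm{write}\rangle$ is immunized, no other operation can even read $p$ while $o$ holds the write latch. With these cases checked, the theorem follows; no induction on the number of prior failures is needed because each action is treated independently by Rule~2.
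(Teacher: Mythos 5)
Your proof is correct and follows essentially the same route as the paper's: both arguments rest on the two key facts that early latching in global block-id order cannot deadlock and that Rule~3 only ever aborts an operation on an action where it has not previously failed. The paper organizes this as ``no deadlock, hence the only aborts come from Rule~3'' by contradiction on a waits-for cycle, whereas you verify the retried action directly and prove termination of the early-latching phase constructively, but the underlying ideas are identical (your treatment of the read-to-write upgrade case is a welcome extra detail the paper glosses over).
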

\begin{proof}
First, if we apply Rules 1, 2 and 3, there will not be deadlock. To prove it, we assume that there is a deadlock in the form $o_1 \rightarrow o_2 \rightarrow \cdots \rightarrow o_n \rightarrow o_1$. We know that there is a universal order for early latching. Then, not all operations involved in the deadlock are in the early latching phase. Suppose that $o_j$ is not in the early latching phase and the r/w action blocking $o_j$ is $a_j$. Then, we can conclude that $o_j$ must have not failed on $a_j$. (Otherwise, $o_j$ should be blocked in the early latching phase.) According to Rule 3, if $o_j$ have not failed on $a_j$, $o_j$ should be aborted instead of being blocked. Then, the deadlock is impossible. We are in contradiction.

If deadlock is impossible, abort can only occur when we apply Rule 3. That is to say, a data operation can only fail on an r/w action where it has never failed. This is exactly what progressiveness needs.
\end{proof}

Algorithm~\ref{alg:op_exec} describes our scheduler. The duration of a data operation is divided into three phases. In the early latching phase, the operation latches all the blocks in the immunity set. During the execution phase, an operation performs updates only in its private workspace. This facilitates abortion -- to abort an operation, we simply discard its workspace. After the execution phase, the operation enters a clearing phase, in which it makes its modification visible to other operations.

In the scenario of intensive B-tree insertion (illustrated in Figure~\ref{fig:btree_pattern}), our progressive scheduler is superior to strict two-phase latching. Two concurrent B-tree insertions may conflict when they attempt to upgrade their latches on the same leaf node $l$. In this case, our scheduler aborts both insertions, and adds the r/w action $\langle l, write \rangle$ to their immunity sets. When it retries the two B-tree insertions, it will place a $write$ latch on $l$ at the very beginning. This guarantees the success of the retries. If we employ strict two-phase latching, the two B-tree insertions may fail repeatedly.

Compared to traditional optimistic CC mechanisms, such as OCC and SSI, early latching may seem too pessimistic. In fact, our basic assumption is that data operations are all short. In real-world systems, this assumption is valid, since long and sophisticated data manipulations are always composed of short and generic operations. Under this assumption, it is unlikely that early latching will hurt performance severely. It is more important to ensure the progressiveness of operation execution, as it frees system developers from the concerns on performance corner cases.
In contrast to operations, lengths of transactions are less controllable, as they are determined by applications. This is the reason why we decide not to apply the same progressive scheduler to the transactional level.

\begin{algorithm2e}[t]
\begin{scriptsize}
\caption{\small The Processing of a Data Operation}
\label{alg:op_exec}
\SetAlgoLined
\SetKwFunction{FbeginOp}{beginOp}
\SetKwFunction{Fread}{read}
\SetKwFunction{Fwrite}{write}
\SetKwFunction{FendOp}{endOp}
\SetKwProg{Fn}{Function}{:}{end}
\Fn{\FbeginOp{$t$,$o$}}{
    // Start of the Early Latching Phase\\
    sort $o$'s immunity set $S_o$ based on block ids \\ \label{alg:op:1}
    remove any $\langle p, read \rangle$ from $S_o$, if $\langle p, write \rangle \in S_o$\\
    \For {each $\langle p, m \rangle \in S_o$} {
        $o$ places a latch of mode $m$ on $p$\\
        // $o$ will be blocked, if $p$ has already been latched
    } \label{alg:op:2}
    // Start of the Execution Phase\\
}
\Fn{\Fread{$t$,$o$,$p$,$buf$}}{
        \If {$o$ has not latched $p$} {
            $o$ places a read latch on $p$ \label{alg:op:3}\\
             \If {$o$ is blocked on Line~\ref{alg:op:3}} {
            set $S_o = S_o \cup \{\langle p, read \rangle\}$ \label{alg:op:31} \\
            fail $o$ \\
            \FendOp{$t$,$o$}\\
            \KwRet \label{alg:op:32}
            }
        }
        read the block identified by $p$ into $buf$\\
}
\Fn{\Fwrite{$t$,$o$,$p$,$buf$}}{
        \If {$o$ has not latched $p$} {
            $o$ places a write latch on $p$ \label{alg:op:4}\\
            \If {$o$ is blocked on Line~\ref{alg:op:4}} {
            set $S_o = S_o \cup \{\langle p, write \rangle\}$ \label{alg:op:41}\\
            fail $o$ \\
            \FendOp{$t$,$o$}\\
            \KwRet \label{alg:op:42}
            }
        }
        write $buf$ into the block identified by $p$\\
}
\Fn{\FendOp{$t$,$o$}}{
    // Start of the Clearing Phase\\
    \If{$o$ failed}{
        $o$ unlatches all acquired latches\\
        \KwRet
    }
    \If {any data accessed by $o$ is uncommitted} { \label{alg:op:20}
        abort $t$ \\
        \KwRet
    } \label{alg:op:21}
    make $o$'s modification visible\\ \label{alg:op:7}
    \For {each block $p$ accessed by $o$} {
        set $o.latchcount[p] = p.latchcount$\\
        set $p.latchcount = p.latchcount+1$\\
        $o$ unlatches $p$ \label{alg:op:5}\\
    }
    // Start of the Locking Phase of the Transactional Scheduler \\
    \For {each block $p$ accessed by $o$} { \label{alg:op:6}
        set $m$ to shared mode if $o$ has read $p$ \\
        set $m$ to exclusive mode if $o$ has modified $p$ \\
        $t$ places a lock of mode $m$ on $p$ \label{alg:op:8}\\
        \uIf {$o.latchcount[p] > p.lockcount$ } { \label{alg:op:9}
            add $o.latchcount[p]$ to $p.incre$\\
            abort $t$ \label{alg:op:10}
        }\Else{
            set $p.lockcount = p.lockcount+1$\\
            \While{ $p.lockcount \in p.incre$}{
                remove $p.lockcount$ from $p.incre$\\
                $p.lockcount = p.lockcount+1$
            }
        }
    }
}
\end{scriptsize}
\end{algorithm2e}

\vspace{-0.1cm}
\subsection{Transactional Scheduler}
\vspace{-0.0cm}
\label{sec:tx_level}
The operational scheduler ensures a serial order of data operations. The transactional scheduler is supposed to schedule the operations to enforce serializability among transactions.
In theory, it can employ any CC mechanism to enforce serializability, including 2PL, SSI, OCC, etc.
However, there is a distinction between TCC and traditional DBMSes in transactional scheduling. In traditional DBMSes, the scheduler can predict conflicts between data operations prior to their execution, by comparing object ids or query predicates.
In TCC, the scheduler can only observe conflicts during or after the execution of operations, as confliction can only be inferred from r/w actions on the physical storage (Section~\ref{sec:correct}).
This makes the design of the transactional scheduler less straight forward.

We devised two transactional schedulers for TCC -- a basic scheduler which applies two-phase locking to enforce conflict serializability, and an extended scheduler which can relax the schedules to view serializability.

\vspace{-0.1cm}
\subsubsection{The Basic Scheduler}
\vspace{-0.0cm}
To perform 2PL, we need to determine the objects of locking. The locking objects of traditional DBMSes, such as tuple, table and predicate, do not apply, as they are unknown to TCC. Therefore, TCC has to place locks directly on data blocks. As mentioned earlier, r/w actions on data blocks enable TCC to capture all conflicts among data operations. Locking blocks suffices to achieve 2PL.

Our design of the 2PL mechanism has to consider the particular situation of TCC. First, we decide to perform locking only after a data operation completes.
If we perform locking during the execution of a data operation, it will interfere with the work of the operational scheduler, making progressiveness difficult to achieve.
As shown in the \emph{endOp} function of Algorithm~\ref{alg:op_exec}, we perform locking after
the Clearing Phase of each data operation. More precisely, locks are added after all the latches are released.
Separating latching and locking phases enables us to avoid unresolvable deadlocks.
If we perform locking before latches are released, latches and locks may together constitute a deadlock. Such deadlocks are expensive to detect and resolve.
For example, in Figure~\ref{fig:lock_latch}, two transactions $T_1$ and $T_2$ are executed concurrently. At the beginning, $T_1$ executes an operation $o_1$ to update the block $x$. It thus holds a lock on $x$. Then, $T_2$ executes an operation $o_3$ to update the blocks $x$ and $y$. When $T_2$ attempts to lock $x$, it is blocked by $T_1$, while holding latches on both $x$ and $y$. If $T_1$ then executes an operation $o_2$ that updates $y$, it has to wait for $T_2$'s latch on $y$. As a result, a deadlock is formed.

\begin{figure}[t]
\centering
\includegraphics[width=.3\textwidth]{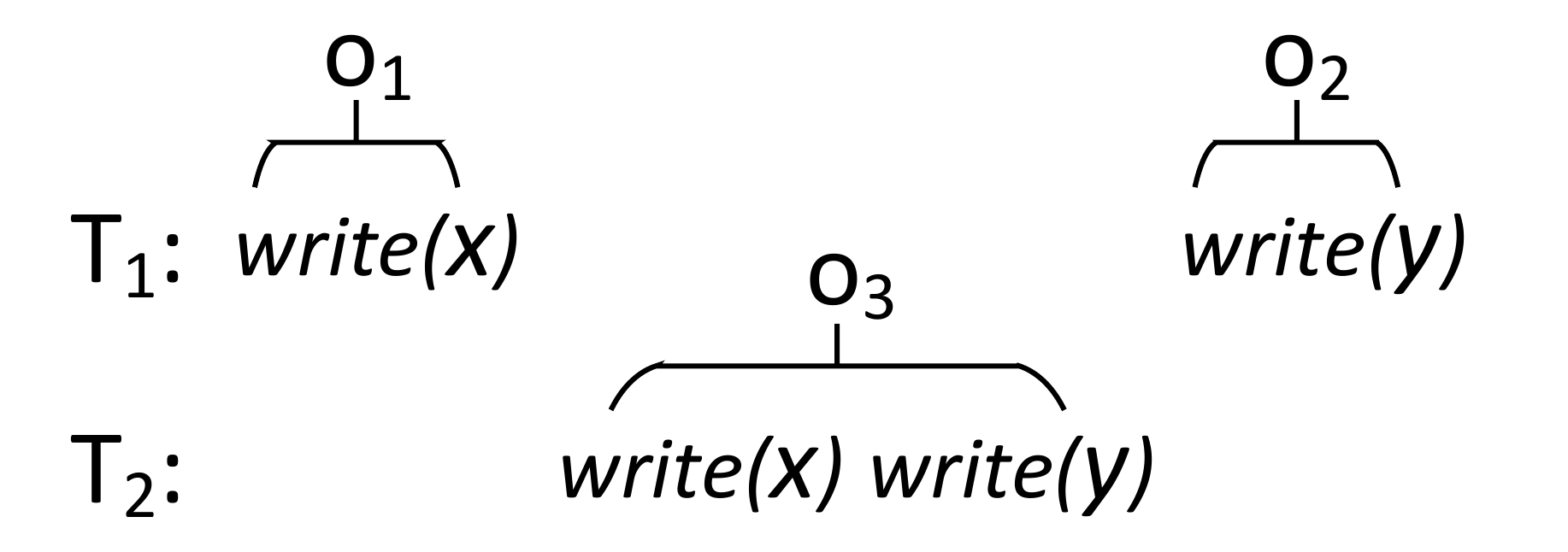}
\vspace{-0.1cm}
\caption{\small An example where locks and latches form a deadlock.}
\label{fig:lock_latch}
\vspace{-0.4cm}
\end{figure}

Second, since the locking phase is separated from the latching phase, we must guarantee that transactions place locks in the same order as their data operations place latches.
That is to say, if two data operations conflict, resulting in a dependency $o_i \rightarrow o_j$, then the transaction of $o_i$ must place the lock before the transaction of $o_j$ does.
To ensure the consistency between latching and locking orders, whenever a transaction obtains a lock, we check if its locking order complies with the latching order. If it does not, we abort the transaction (Line~\ref{alg:op:10} of Algorithm~\ref{alg:op_exec}). We maintain a latch counter and a lock counter for each data block, which will be incremented during the latching and locking phases respectively. If a transaction performs locking in the right order, it is supposed to observe identical latch and lock counters. If there is a gap between the two counters (Line~\ref{alg:op:9} of Algorithm~\ref{alg:op_exec}), it means that the locking order and the latching order are inconsistent.

A possible concern is that the separation between the latching and locking phases may lead to high abort rate. According to our experiment study (Section~\ref{exp:tx_scheduler}), this is unlikely, as the interval between the two phases is sufficiently small.

Recoverability refers to the ability to abort transactions correctly.
When a transaction aborts, it needs to perform extra writes on the data blocks it has modified, to recover them to the original versions.
It has been proven that recoverability is achievable if we disallow access on uncommitted data~\cite{weikum2001transactional}.
In principle, 2PL guarantees that no uncommitted data is accessed by any transaction.
As to TCC, since it performs locking after latches are released,
it is possible that a data operation accesses uncommitted data.
To ensure recoverability, we simply abort transactions that accessed uncommitted data (Line~\ref{alg:op:20}-\ref{alg:op:21} of Algorithm~\ref{alg:op_exec}).

\begin{algorithm2e}
\begin{scriptsize}
\caption{\small The Processing of a Transaction}
\label{alg:tx_exec}
\SetAlgoLined
\SetKwFunction{FbeginTx}{beginTx}
\SetKwFunction{FabortTx}{abortTx}
\SetKwFunction{FendTx}{endTx}
\SetKwProg{Fn}{Function}{:}{end}
\Fn{\FbeginTx{$t$}}{
    initialize $t$
}
\Fn{\FabortTx{$t$}}{
    sort the operations $O_t$ of $t$ by the reverse order of their invocation\\
    \For{each operation $o \in O_t$}{
        \uIf{$o$ has an inverse operation $o^{-1}$}{
            $t$ invokes $o^{-1}$
        }\Else{
            undo $o$ through the undo log
        }
    }
    release all the locks of $t$
}
\Fn{\FendTx{$t$}}{
    release all the locks of $t$
}
\end{scriptsize}
\end{algorithm2e}

TCC provides two ways to rollback a transaction. First, it maintains undo logs and uses them to recover data blocks to older versions. As an aborted transaction has already locked the data it has modified, no other transaction can access the data before the rollback is finished. Second, system engineers may have created inverse operations for some data operations. Then, we can cancel a data operation by executing its inverse operation. The details will be discussed in the Section~\ref{sec:expanded_schedule}.

Algorithm~\ref{alg:tx_exec} depicts how the basic transactional scheduler works.
It is worth noting that our transactional scheduler is not deadlock free. It thus requires a deadlock detector. Moreover, our transactional scheduler does not ensure progressiveness. Since our progressive scheduler can be overly pessimistic, applying it to the transaction level may hurt the concurrency of long-duration transactions. We consider it as application developers' responsibility to ensure the performance of transactions. This is how the state-of-the-art software development works.

\vspace{-0.1cm}
\subsubsection{The Extended Scheduler}
\vspace{-0.0cm}
\label{sec:expanded_schedule}

The basic scheduler enforces conflict serializability. As discussed previously, conflict serializability can be overkill.
To improve the concurrency of transaction processing, we have introduced the concept of view serializability, which allows us to take data semantics into consideration.

An important type of data semantics is commutativity.

\begin{definition}[Commutative Operation]
\vspace{-0cm}
\label{def:commutable}
Two operations $o_i$ and $o_j$ are commutative, iff for any two sequences of data operations, say $\alpha$ and $\beta$, the two schedules $[\alpha, o_i, o_j, \beta]$ and $[\alpha, o_j, o_i, \beta]$ are view equivalent.
\vspace{-0cm}
\end{definition}

TCC provides an interface \emph{add\_commutativity(int $op_1$, void *$args_1$, int $op_2$, void *$args_2$)} for system developers to declare that data operations of type \emph{op$_1$} and \emph{op$_2$} are commutative operations.
$args_1$ and $args_2$ are the argument lists of $op_1$ and $op_2$ respectively. They are used to specify the conditions where commutativity holds.
For example, suppose that the type of B-tree insertions is identified by \emph{$1$}. A developer can invoke \emph{add\_commutativity(1, null, 1, null)} to notify TCC that B-tree insertions are always mutually commutative.

Conflicts among consecutive commutative operations can be ignored when we enforce view serializability. This can be confirmed by the following theorem.

\begin{theorem}
\vspace{-0.2cm}
\label{thm:final}
A schedule preserves view serializability if the following conditions are satisfied:
\begin{itemize}[nosep]
\item Suppose $D$ is the complete set of dependencies among the transactions.
\item Suppose $D'$ is the complete set of dependencies caused by consecutive commutable operations.
\item The dependency graph $G$ consisting of $D-D'$ is acyclic.
\end{itemize}
\end{theorem}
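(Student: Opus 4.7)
The plan is to exhibit a serial schedule $S^*$ that is view equivalent to $S$, thereby witnessing view serializability of $S$. Since $G = D - D'$ is acyclic, I would extend it to a topological order of the transactions and define $S^*$ to be the serial schedule that runs the transactions in this order, preserving each transaction's internal operation order.

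The main step is to transform $S$ into $S^*$ through a finite sequence of view-preserving adjacent transpositions of operations. An adjacent swap of operations $a$ and $b$ is view-preserving in two cases: (i) $a$ and $b$ do not conflict (by Theorem~\ref{theorem:op_conflict} their r/w actions do not conflict either, so reordering them cannot change any operation's return value on any future sequence), or (ii) $a$ and $b$ are commutative, in which case view equivalence of the swap follows immediately from the Definition of Commutative Operation. Since view equivalence composes, chaining such swaps yields a globally view-equivalent result.

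The key claim to establish is that whenever an intermediate schedule $S'$ (reachable from $S$ by view-preserving swaps) still differs from $S^*$, some adjacent pair in $S'$ is inverted relative to $S^*$ and admits a view-preserving swap. I would do this by case analysis on an adjacent inverted pair $a, b$ with $a$ just before $b$ in $S'$ but $b$ before $a$ in $S^*$. If $a, b$ do not conflict, case (i) applies. If they do conflict, consider their order in the original $S$: if $a$ preceded $b$ in $S$, then $a \to b \in D$, and since $S^*$ violates this edge it cannot lie in $D - D'$, forcing $a \to b \in D'$; by the definition of $D'$ the pair must be commutative, so case (ii) applies. If instead $b$ preceded $a$ in $S$, then the current inverted order of $a, b$ in $S'$ must have been achieved by some prior view-preserving swap on this same conflicting pair, and that prior swap could only have been justified via case (ii), so again $a, b$ are commutative.

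Each such swap strictly decreases the inversion count of $S'$ with respect to $S^*$, so the process terminates after finitely many steps with $S' = S^*$, and by compositionality of view equivalence $S$ is view equivalent to the serial schedule $S^*$. The main obstacle is the second sub-case of the above analysis: it requires maintaining the inductive invariant that conflicting non-commutative pairs retain their original relative order throughout all intermediate schedules, which is what lets us conclude that any reversal of a conflicting pair reached via view-preserving swaps must have exploited commutativity. Once this invariant is in place, the case analysis and the inversion-termination argument together close the proof.
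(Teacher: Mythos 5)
Your proof is correct, but it takes a noticeably different route from the paper's. The paper argues by contradiction at the level of the dependency graph: it asserts that a failure of view serializability would yield a dependency cycle, observes that any edge of $D'$ in such a cycle could be ``rearranged'' because the underlying operations are commutative, and concludes that the cycle would have to live in $D-D'$, contradicting acyclicity. That argument is only a sketch --- in particular it silently assumes the implication ``not view serializable $\Rightarrow$ dependency cycle,'' which is essentially the theorem itself in contrapositive form. You instead construct the witnessing serial schedule explicitly: topologically sort $G$, and bubble-sort $S$ into the resulting serial order by adjacent transpositions, each justified either by non-conflict (physical-state argument) or by commutativity (an edge instance whose transaction-level edge is absent from $G$ must lie entirely in $D'$). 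Your version actually discharges the obligation the paper waves at, and it makes visible where the ``consecutive commutable'' hypothesis enters: whenever an inverted conflicting pair is encountered, its dependency instance is forced into $D'$, which by definition certifies commutativity. Two small points. First, your appeal to Theorem~\ref{theorem:op_conflict} runs in the wrong direction --- that theorem says operation-level conflict implies r/w-action conflict, not the converse; the fact you actually need (non-conflicting r/w actions leave the physical state, hence all future return values, unchanged under transposition) follows directly from Prerequisite~\ref{pre:complete} by the same argument used inside that theorem's proof. Second, the third sub-case you labor over is vacuous: since you only ever swap pairs that are inverted relative to $S^*$, each unordered pair changes relative order at most once and only toward the $S^*$ order, so an inverted adjacent pair in any intermediate schedule necessarily still carries its original order from $S$; the invariant you worry about holds trivially and the case analysis collapses to the first two cases.
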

\begin{proof}
For each pair of dependency $T_p \rightarrow T_q \in D'$, we can rearrange the order of $T_p$ and $T_q$, i.e., turning it to $T_q \rightarrow T_p$, without violating view serializability.

If the schedule does not satisfy view serializability, there must be a dependency cycle. Then, the cycle must not contain a dependency in $D'$. Otherwise, we can rearrange the dependency to break the cycle.
\end{proof}

To take advantage of commutativity in TCC, we extend the basic scheduler.
We regard locks hold by commutative data operations compatible. For example, if transaction $T_1$ executed a B-tree insertion and modified the leaf node $l$, $T_1$ will hold an exclusive lock on $l$. Then, when another transaction $T_2$ executes a B-tree insertion and modifies the same leaf node $l$, $T_2$ can be granted with an exclusive lock on $l$ too. (To the basic scheduler, $T_2$ is supposed to be blocked.) This preserves view serializability, as the execution order of commutative operations can be arbitrary.

However, when commutativity is considered, extra measures are required to ensure recoverability.
As an exclusive lock is no longer exclusive to commutable operations, a transaction may read uncommitted data. If the uncommitted data is aborted, we have to perform cascading abort, which will be expensive.
While we can forbid access on uncommitted data, it makes commutativity useless.
To undo commutative data operations, the best strategy is to use \emph{inverse operations}.

\begin{definition}[Inverse Operation]
\vspace{-0cm}
\label{def:inverse}
$o^{-1}$ is an inverse operation of the operation $o$, iff for any two sequences of data operations, say $\alpha$ and $\beta$, the two schedules $[\alpha, \beta]$ and $[\alpha, o, o^{-1}, \beta]$ are view equivalent.
\vspace{-0cm}
\end{definition}

TCC provides an interface \emph{addInverse(int $op_1$, void *$args_1$, int $op_2$, void *$args_2$)} for system developers to declare inverse operations. This interface specifies that $op_2$ is an inverse operation of $op_1$.
$args_1$ and $args_2$ are the argument lists of $op_1$ and $op_2$ respectively.
For example, suppose that B-tree deletion is an inverse operation of B-tree insertion.
We can declare the inverse operations by invoking \emph{addInverse(btreeInsert, [$key$, $value$], btreeDelete, [$key$])}. It indicates that if we perform B-tree deletion on $key$, it will undo the B-tree insertion with the same $key$.

If a data operation's uncommitted data has been accessed by its commutative operations, we can abort it by simply invoking its inverse operation, without also aborting its commutative operations. The following theorem justifies this.

\begin{theorem}
\vspace{-0cm}
\label{theorem:inverse}
Suppose the operations $o$ and $o'$ are commutative, and $o^{-1}$ is an inverse operation of $o$. Given any two sequences of data operations, say $\alpha$ and $\beta$,  $[\alpha, o, o', o^{-1}, \beta]$ and  $[\alpha, o', \beta]$ are view equivalent.
\vspace{-0cm}
\end{theorem}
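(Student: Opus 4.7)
The plan is to prove Theorem 4 by chaining together the two provided semantic axioms (commutativity and inverse operation) via the transitivity of view equivalence. The target schedule is $[\alpha, o, o', o^{-1}, \beta]$, and the goal is to transform it into $[\alpha, o', \beta]$ through a two-step rewrite in which each step preserves view equivalence.

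First, I would apply Definition 3 (Commutative Operation) to swap the adjacent pair $o, o'$ inside the schedule. Instantiating that definition with prefix $\alpha$ and suffix $[o^{-1}, \beta]$, commutativity of $o$ and $o'$ yields that $[\alpha, o, o', o^{-1}, \beta]$ is view equivalent to $[\alpha, o', o, o^{-1}, \beta]$. Second, I would apply Definition 4 (Inverse Operation) to eliminate the adjacent pair $o, o^{-1}$. Instantiating that definition with prefix $[\alpha, o']$ and suffix $\beta$, the inverse property gives that $[\alpha, o', o, o^{-1}, \beta]$ is view equivalent to $[\alpha, o', \beta]$. Composing the two equivalences (view equivalence is clearly transitive, since it is defined as identity of return values for every extension $A$ and $P$) establishes the claim.

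I should also briefly justify transitivity of view equivalence as a sanity check: if $S_1 \equiv S_2$ and $S_2 \equiv S_3$, then for every $A, P$ the return values of the operations in $AS_1P$ equal those in $AS_2P$, which in turn equal those in $AS_3P$, so $S_1 \equiv S_3$. This is essentially a remark rather than a step.

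The main obstacle, such as it is, is purely notational: making sure the prefix/suffix instantiations used when invoking the commutativity and inverse definitions are chosen consistently so that the two rewrites compose on the same intermediate schedule $[\alpha, o', o, o^{-1}, \beta]$. There is no combinatorial or structural difficulty beyond the two direct applications of the definitions, so the proof should be short and essentially mechanical.
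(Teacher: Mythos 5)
Your proposal is correct and matches the paper's own proof exactly: both first use commutativity to rewrite $[\alpha, o, o', o^{-1}, \beta]$ as $[\alpha, o', o, o^{-1}, \beta]$, then use the inverse-operation definition to collapse this to $[\alpha, o', \beta]$, and conclude by transitivity of view equivalence. Your added remark justifying transitivity is a small bonus the paper leaves implicit.
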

\begin{proof}
The proof is straightforward. By Definition~\ref{def:commutable}, $[\alpha, o, o', o^{-1}, \beta]$ and $[\alpha, o', o, o^{-1}, \beta]$ are view equivalent. By Definition~\ref{def:inverse}, $[\alpha, o', o, o^{-1}, \beta]$ and $[\alpha, o', \beta]$ are view equivalent. Thus, $[\alpha, o, o', o^{-1}, \beta]$ and $[\alpha, o', \beta]$ are view equivalent.
\end{proof}

When we abort a transaction, we undo its operations serially in reverse order. For an operation that is not commutative with any other operations, we undo it through the undo log. For an operation that has commutative operations, we invoke its inverse operation to undo it. Different from executing an undo log, an inverse operation can possibly be blocked by other transactions. In this case, instead of letting it be blocked, we choose to fail the inverse operation and retry it.
And we repeat retrying until it succeeds.

In this paper, we consider only commutative and inverse operations. It is possible to define and exploit other types of data semantics in TCC. However, this is not within the scope of our current work.

\vspace{-0.1cm}
\section{Experimental Study}
\vspace{-0.0cm}

To evaluate the practicality of TCC, the best way is to apply TCC to an existing database system, whose design is completely oblivious to how TCC works.
The purpose of TCC is to make concurrency control transparent to database engineers. If we create a new database system based on TCC, we will be inclined to tailor its design to the particular mechanisms of TCC. This will make the evaluation less objective.
However, a complete substitution of the existing CC mechanism in a DBMS is extremely costly, if not impossible. The code of CC is usually intertwined with a large number of components of a DBMS, including the metadata manager, the storage space manager, the table manager, the indexer, etc. A complete deployment of TCC requires us to re-engineer all the components. It is beyond the capability of our research team.
As a compromise, we chose to apply TCC to only the indexes of a DBMS. Indexes are typical data structures in data management. Their concurrency controllers are usually highly specialized. In the TCC architecture, they are likely to be affected by the predictability and semantic gaps. Therefore, evaluation on indexes can show how well TCC deals with the two gaps in a generic DBMS.

\vspace{-0.1cm}
\subsection{The Implementation}
\vspace{-0.0cm}
\label{sec:impl}

Our codebase is Shore-MT~\cite{johnson2009shore}, a well used research prototype of RDBMS. It adopts 2PL for transaction-level CC and applies specialized CC mechanisms to indexes and metadata.

B-tree is the only type of index used by Shore-MT. We disabled the original concurrency controller on the B-trees of Shore-MT, and supplemented it with the TCC mechanism.
Shore-MT's B-tree are disk-resident. Any access to a B-tree node needs to first fix the underlying block in the buffer to avoid invalid access. Therefore, we regarded the ``fix'' routines as the read/write interface of the physical storage, and deployed the TCC module around it. This allows TCC to capture every r/w action on B-tree.

We implemented four mechanisms of TCC. The first three adopt the architecture of transactional memory (i.e., the middle one in Figure~\ref{fig:arch}) and apply standard 2PL, SSI and OCC respectively to enforce serializability.
These three mechanisms ignore the existence of data operations, and simply treat each transaction as a sequence of r/w actions. They may thus suffer from the the predictability and semantic gaps introduced in Section~\ref{sec:two_issues}.
We denote them by $TCC_{2PL}$, $TCC_{SSI}$ and $TCC_{OCC}$ respectively.
The fourth one is the TCC mechanism we proposed in this paper (adopting the architecture on the right of Figure~\ref{fig:arch}). We denote it by $TCC$. Since $TCC$ uses two transactional schedulers, a basic one and an extended one (Section~\ref{sec:tx_level}), we denote a variant of $TCC$ that uses only the basic transactional scheduler as $TCC_{basic}$.

To preserve the ACID of transactions, we need to integrate the CC of B-trees with that of the rest of Shore-MT. For $TCC$, we let its transactional scheduler and the rest of Shore-MT share the same lock manager.
We did the same to $TCC_{2PL}$. For $TCC_{SSI}$ and $TCC_{OCC}$, we implemented two variants of Shore-MT, $MT_{SSI}$ and $MT_{OCC}$, which uses SSI and OCC for concurrency control. Then, we integrated the schedulers of
$TCC_{SSI}$ and $TCC_{OCC}$ into those of $MT_{SSI}$ and $MT_{OCC}$ respectively.

Shore-MT does not support MVCC.
To implement $TCC_{SSI}$ and $MT_{SSI}$, we carved out an additional storage space to store old versions of data.
All versions of a data block are linked together, so that a transaction can easily retrieve the proper version to read.
Regarding the implementation of $TCC_{OCC}$ and $MT_{OCC}$, we maintain a write set and a read set for each transaction. During the validation stage, a transction locks the write set and validates the read set.

\vspace{-0.1cm}
\subsection{Experiment Setup}
\vspace{-0.0cm}

We compared TCC against the original CC mechanisms of Shore-MT. We had three versions of Shore-MT, $MT_{2PL}$, $MT_{SSI}$ and $MT_{OCC}$.
$MT_{2PL}$ is the original Shore-MT, which uses 2PL for concurrency control. To achieve its best performance, we applied two of its optimization patches, i.e., Speculative Lock Inheritance (SLI)~\cite{johnson2009improving} and Early Lock Release (ELR)~\cite{johnson2010aether}. $MT_{SSI}$ and $MT_{OCC}$ are variants of Shore-MT that uses SSI and OCC for concurrency control. They were implemented to cooperate with $TCC_{SSI}$ and $TCC_{OCC}$.

The experiments were carried out on an HP workstation equipped with $4$ Intel Xeon E7-4830 CPUs (with 32 cores and 64 physical threads in total) and a SATA-2T disk. The operating system was 64-bit Ubuntu 12.04. In most of the experiments, we set the buffer size to $32$ MB. For the experiments on TPC-C, we set the buffer size to $4$ GB (default setting of ShoreKit). For the experiments on TATP, we set the buffer size to $1$ GB (default setting of ShoreKit). We intentionally kept the buffer size large, to minimize I/O wait time. This helps to maximize concurrency control's influence on performance. For the same reason, we turned off the logging of Shore-MT.

\vspace{-0.1cm}
\subsection{Experiments on Operational Scheduler}
\vspace{-0.0cm}
\label{sec:exp:op_scheduler}

Our operational scheduler was designed to bridge the predictability gap. It is supposed to handle any data operation efficiently, regardless of its data access patterns.
To evaluate the robustness of our operational scheduler, we performed experiments on a variety of scenarios, including different cases of B-tree insertion and an artificial corner case (such as the one depicted in Figure~\ref{fig:general_pattern}).

In the experiments on B-tree insertion, we created a B-tree index of $10^6$ records and ran two types of workload on it. In the first type of workload, each transaction contains a single tuple insertion, which inserts a tuple into a randomly selected leaf node of the B-tree. It represents the case of low contention. In the second type of workload, the transaction performs sequential tuple insertion, such that all transactions contend for the last leaf node. It represents the case of high contention. 

\begin{figure*}
  \centering
  \begin{minipage}{0.5\linewidth}
      \begin{figure}[H]
          \subfigure[\small Throughput]
            {
                \includegraphics[width = 0.49\linewidth]{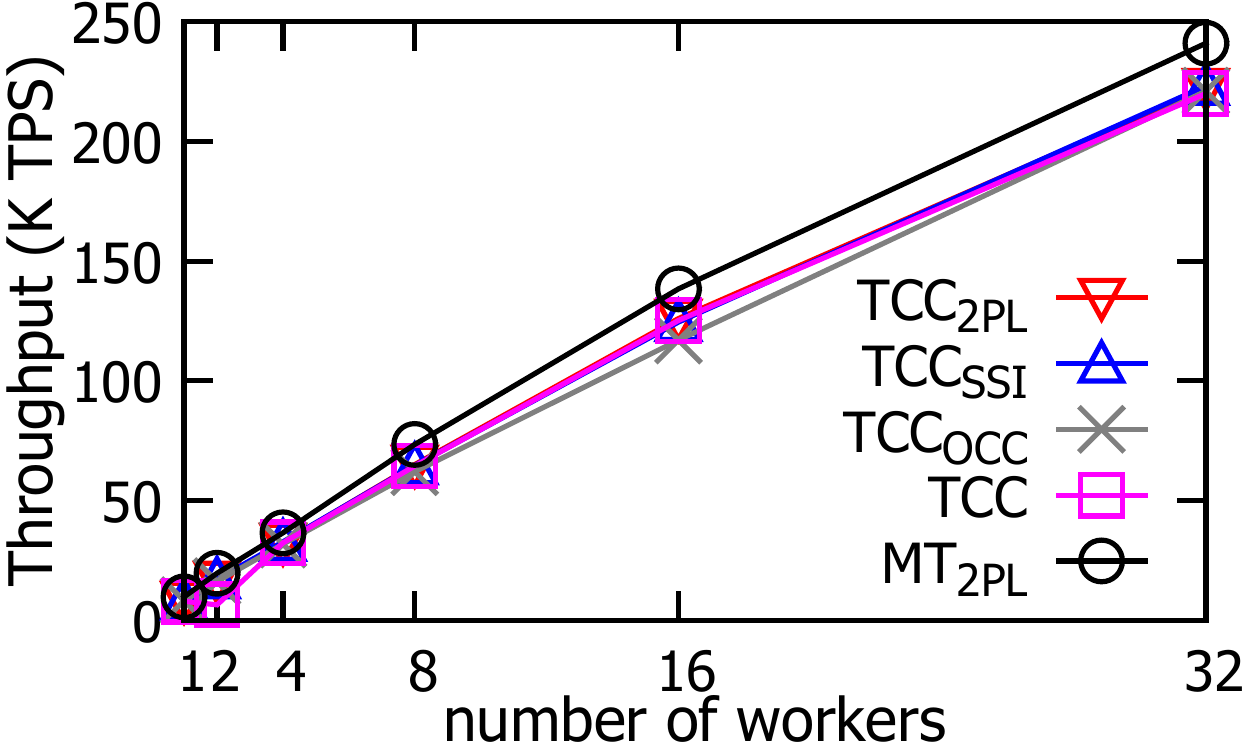}
                \label{fig:random_insert_throughput}
            }
            \hspace{-0.5cm}
            \subfigure[\small Abort Rate]
            {
                \includegraphics[width=0.49\linewidth]{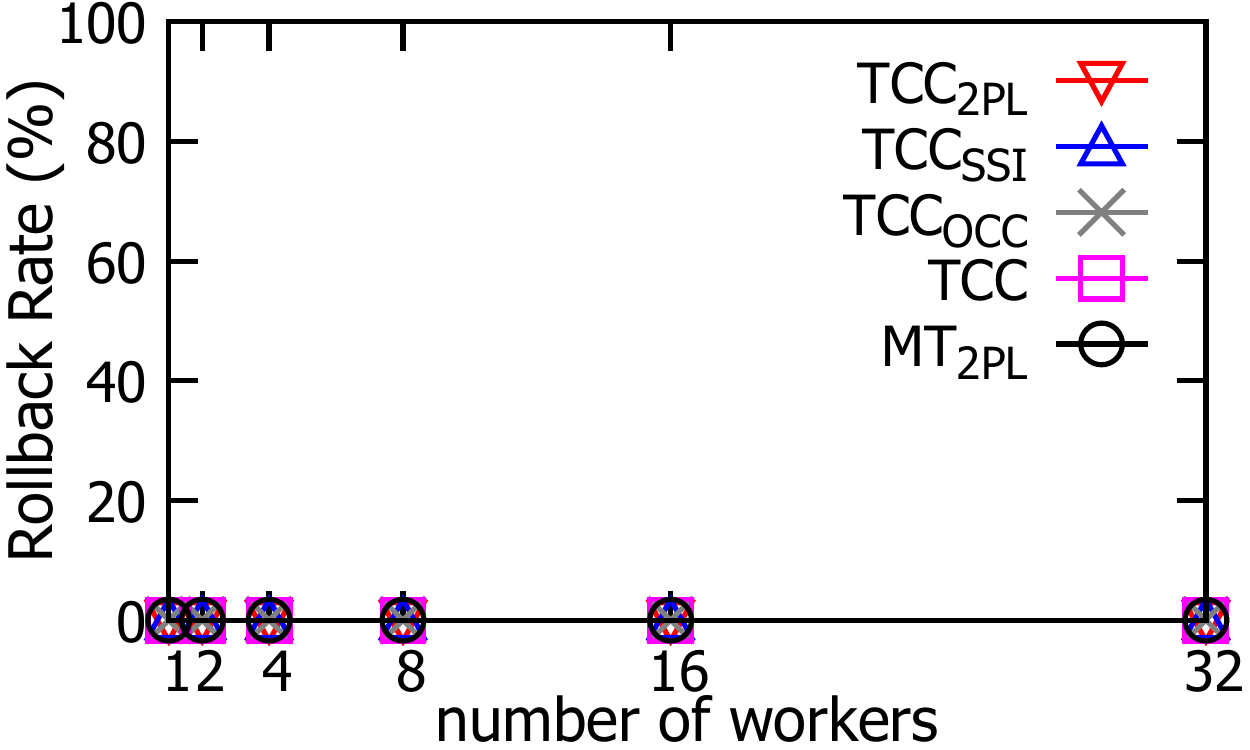}
                \label{fig:random_insert_abort}
            }
            \vspace{-0.5cm}
            \caption*{\small I. Random Insertion.}
            \vspace{-0.2cm}
      \end{figure}
  \end{minipage}
  \hspace{-0.5cm}
  \begin{minipage}{0.5\linewidth}
      \begin{figure}[H]
         \subfigure[\small Throughput]
             {
                \includegraphics[width = 0.49\linewidth]{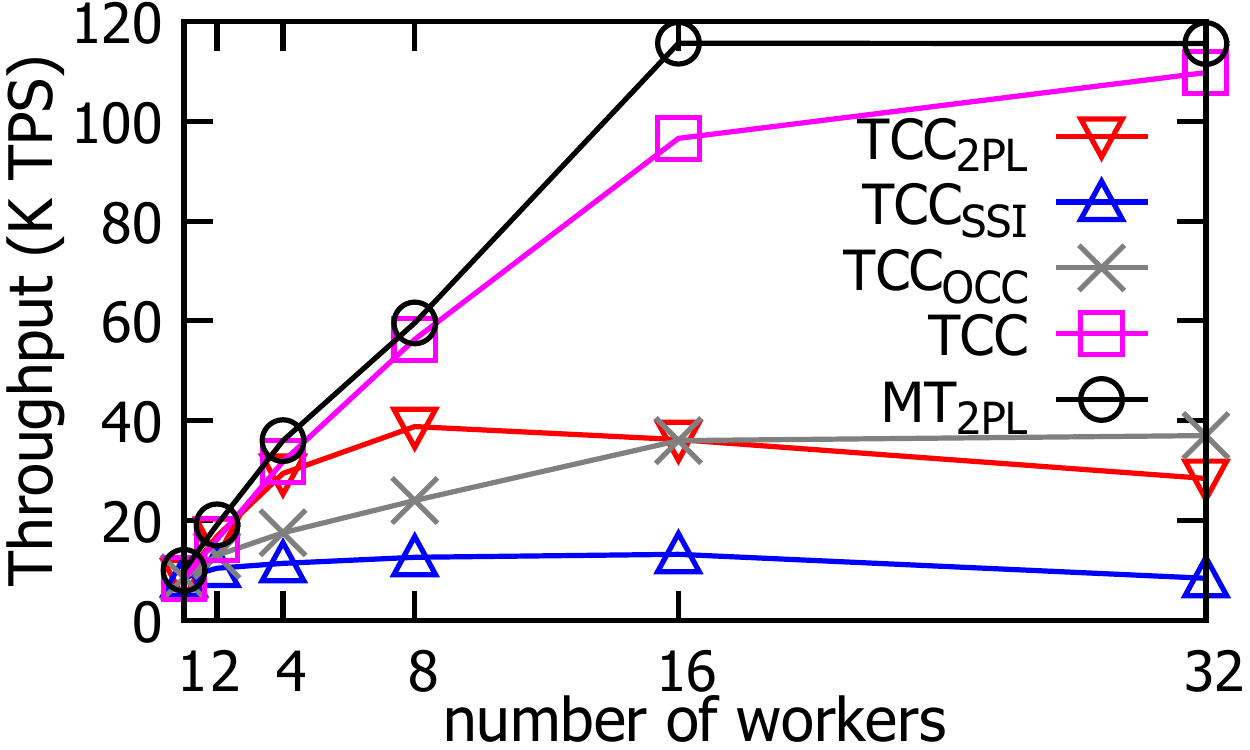}
                \label{fig:sequential_insert_throughput}
            }
            \hspace{-0.5cm}
            \subfigure[\small Abort Rate]
            {
                \includegraphics[width=0.49\linewidth]{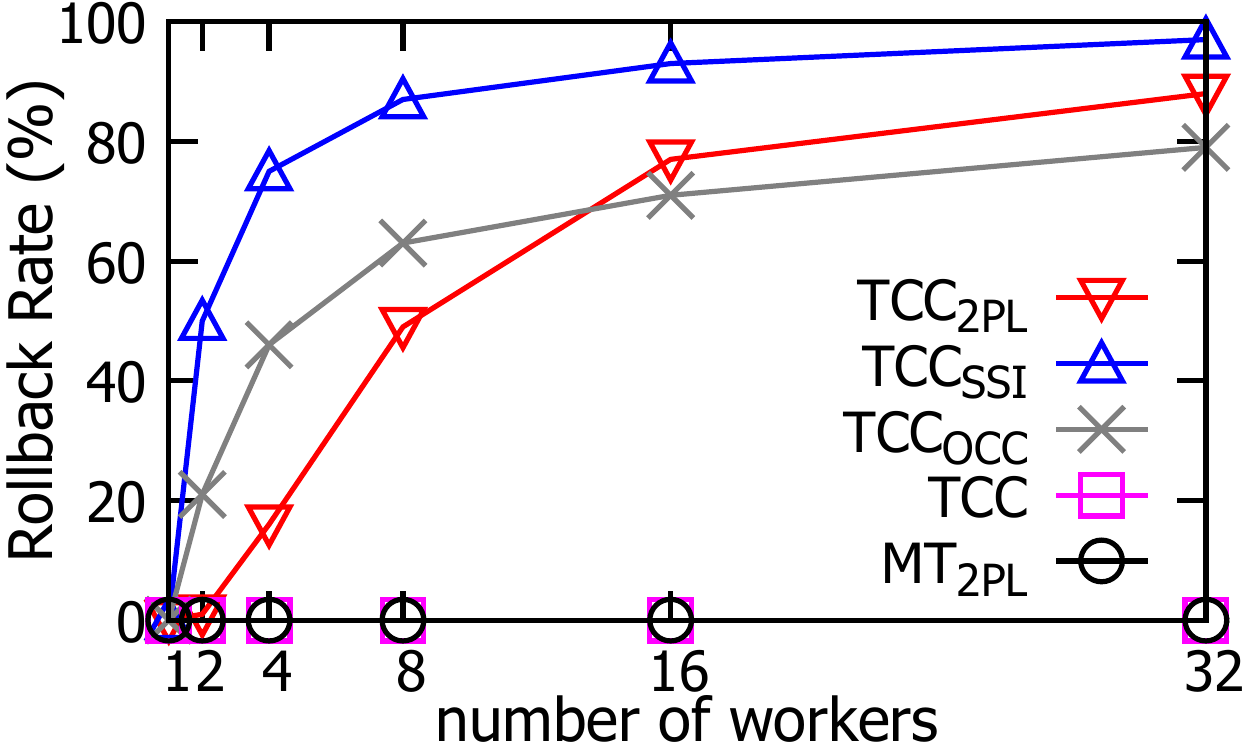}
                \label{fig:sequential_insert_abort}
            }
            \vspace{-0.5cm}
            \caption*{\small II. Sequential Insertion.}
            \label{fig:btree:seq}
            \vspace{-0.2cm}
      \end{figure}
  \end{minipage}
  \vspace{-0.2cm}
  \caption{\small Performance on B-tree Insertion.}
  \vspace{-0.2cm}
  \label{fig:btree_index}
\end{figure*}

Figure~\ref{fig:btree_index} shows the results on B-tree insertion. We can see that all the CC mechanisms perform similarly well when the degree of contention is low. When the degree of contention increases, the performance of $TCC_{2PL}$, $TCC_{SSI}$ and $TCC_{OCC}$ gradually becomes unbearable.
In the case of high contention, $TCC_{2PL}$ suffers from deadlocks. A large number of deadlocks was incurred when it upgraded the latches on the last leaf node of the B-tree (from shared mode to exclusive mode).
This leads to high deadlock-resolving cost and high abort rate (Figure~\ref{fig:btree_index}(d)).
While $TCC_{SSI}$ and $TCC_{OCC}$ do not need to deal with deadlocks, they suffer from high abort rates. When transactions are contending for the last leaf node, $TCC_{OCC}$'s validation phases will be highly likely to fail, and $TCC_{SSI}$ will encounter a large number of write-write conflicts, which can easily force transactions to abort.

In contrast, $TCC$'s performance is significantly better in the high-contention case. It performed as well as Shore-MT's built-in B-tree scheduler. The operational scheduler of $TCC$ is progressive. When a B-tree insertion fails, it automatically retries it, without aborting the host transaction. More importantly, it can learn from errors, such that the retries are limited.
As Table~\ref{table:retry} shows, even when the degree of contention is maximized, $TCC$ can complete a B-tree insertion with $1.7$ retries on average.

\begin{table}
\centering
\caption{\small Retry Frequency per Operation.}
\vspace{-0.2cm}
\label{table:retry}
\small{
\begin{tabular}{c||cccccc}
\multirow{2}{*}{} & \multicolumn{6}{|c}{\# of Workers} \\ 
& 1 & 2 & 4 & 8 & 16 & 32 \\ \hline\hline
B-tree Insert & 0 & 0.04 & 0.93 & 1.27 & 1.55 & 1.70 \\
Corner Case & 0 & 1.32 & 1.39 & 1.41 & 1.43 & 1.47 \\
\end{tabular}
}
\vspace{-0cm}
\end{table}


In our experiments on the corner case, we created an artificial operation in Shore-MT. There are two execution routes. When invoked, the operation will randomly choose one of the routes to execute.
In the first route, the operation is supposed to first read the block $A$, and then perform a large number of random reads, and finally update the block $B$.
In the second route, the operation is supposed to first read $B$, and then perform a large number of random reads, and finally update $A$.
The corner case is intentionally designed to handicap the generic CC mechanisms, including 2PL, SSI and OCC.


Figure~\ref{fig:corner} shows the results.
As we can see, when the degree of concurrency reaches a certain level, $TCC_{2PL}$, $TCC_{SSI}$ and $TCC_{OCC}$ all seem to be subject to starvation.
To $TCC_{2PL}$, a transaction can easily be involved in deadlocks. To $TCC_{SSI}$, write-write conflicts and anti-dependencies will be common, making transactions difficult to succeed.
To $TCC_{OCC}$, validation is difficult to pass.
In contrast, $TCC$ performs much better, as its operational scheduler is progressive. If an operation fails on a data block in the previous execution, it will latch the block upfront to avoid failing again. After one to two retries, the operation is guaranteed to succeed. According to Table~\ref{table:retry}, $TCC$ needs less than 1.47 retries to complete an operation.

\begin{figure}
\vspace{-0cm}
\hspace{-0.3cm}
\subfigure[{\small Throughput}]
{
    \includegraphics[width=0.5\linewidth]{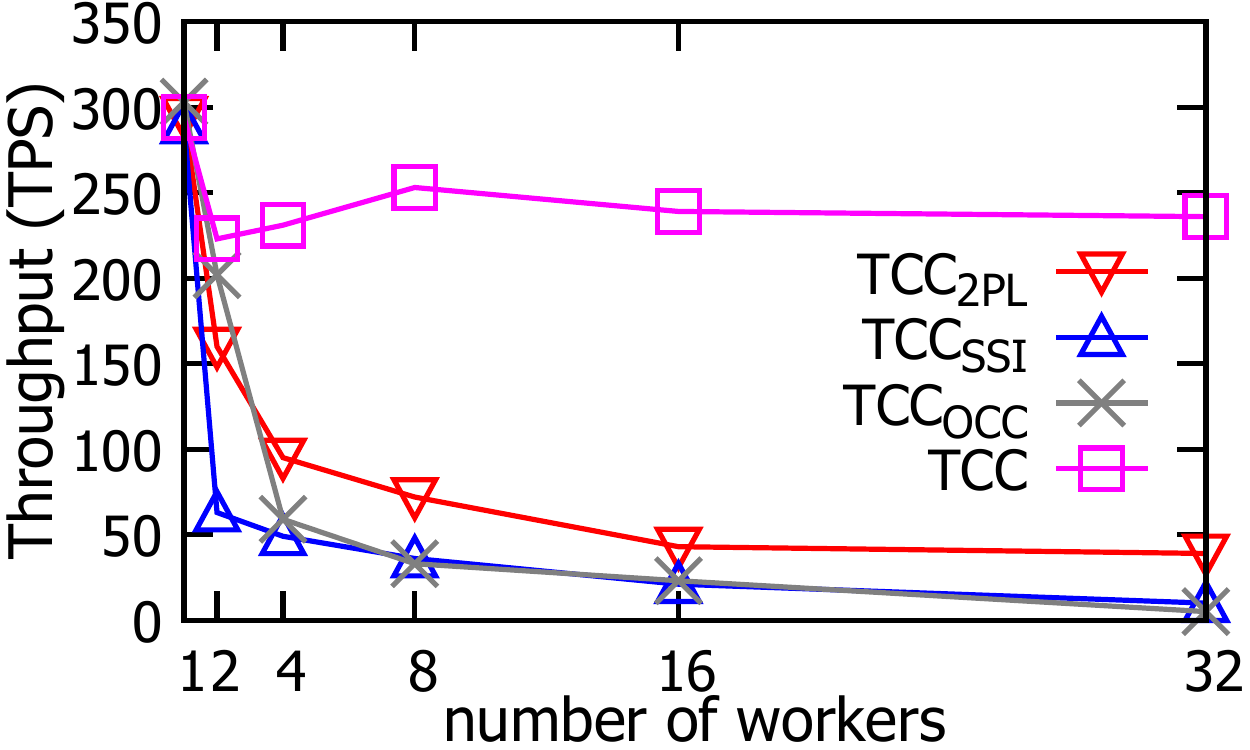}
}
\hspace{-0.4cm}
\subfigure[{\small Abort Rate}]
{
    \includegraphics[width=0.5\linewidth]{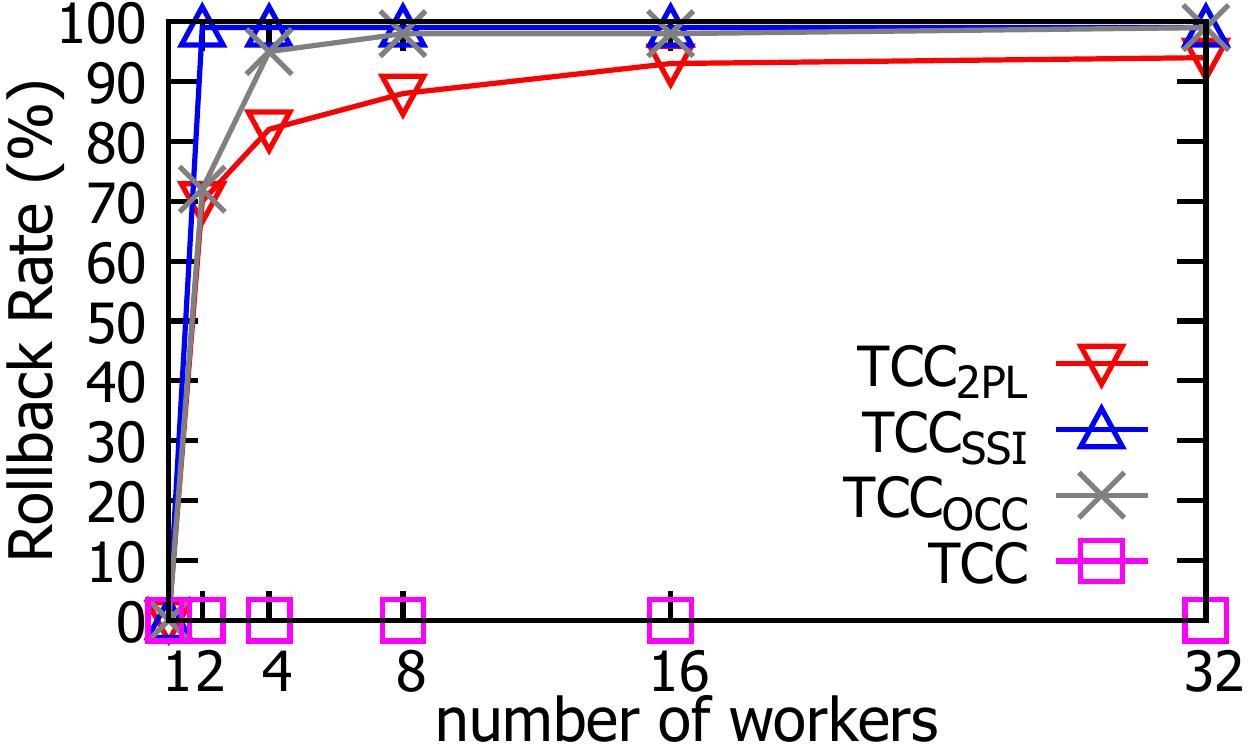}
}
\vspace{-0.4cm}
    \caption{\small Performance on a Corner Case.}
    \label{fig:corner}
\vspace{-0.4cm}
\end{figure}

The experiments justified our initiative to create a progressive operational scheduler. Generic CC mechanisms such as 2PL, SSI and OCC can perform fairly well in some cases. However, there are always cases they stop performing. It is unlikely to get rid of all such corner cases, if we are blind to data access patterns. This is known as the predictability gap.
In contrast, a progressive scheduler seems way more robust. It learns by doing, and is able to exploit the learned access patterns to improve the efficiency.
This is especially meaningful to TCC, which is supposed to make CC transparent to the rest of the system.

\vspace{-0.1cm}
\subsection{Experiments on Transactional Scheduler}
\vspace{-0.0cm}
\label{exp:tx_scheduler}

Our second set of experiments was conducted on the transactional scheduler. It mainly aimed to understand whether data semantics (i.e., commutative and inverse operations) can be exploited to improve performance. We used two types of workload, a revised New-Order workload of TPC-C and an artificial workload. We made two modifications on the New-Order transactions. First, we rebuilt the index of the \emph{order-line} table. The new index key is composed of four fields -- \emph{OrderId}, \emph{WarehouseId}, \emph{DistrictId} and \emph{OrderNumber}.
With this arrangement, insertions in the \emph{order-line} table will contend for the same B-tree leaf node.
Second, we made sure that there were $4$ insertions to the \emph{order-line} table in each transaction. This modification can enlarge the performance gaps among the $TCC$ variants.

We made the following data semantics explicit to $TCC$. First, tuple insertions are mutually commutative. Second, given the same tuple id, tuple deletion is the inverse operation of tuple insertion.

Figure~\ref{fig:new_order} shows the experiment results of revised New-Order. We can see that $TCC$ and $MT_{2PL}$ beat the other approaches.
$TCC_{2PL}$, $TCC_{SSI}$ and $TCC_{OCC}$ suffer from high abort rates, due to the same reason as that in the sequential B-tree insertion experiments.
While $TCC_{basic}$ does not consider data semantics, it still outperforms $TCC_{2PL}$, $TCC_{SSI}$ and $TCC_{OCC}$, due to the adoption of the progressive operational scheduler.
However, it is inferior to $TCC$. If an uncommitted transaction has inserted into a leaf node of a B-tree,  $TCC_{basic}$ will abort other transactions attempting to insert into the same leaf node, as they are accessing uncommitted data. Otherwise, it cannot ensure the recoverability of transactions. $TCC$ can avoid such abortion. $TCC$ allows its B-tree insertions to access uncommitted data, while still preserving recoverability. This is because B-tree insertions are reversible by invoking their inverse operations, i.e., B-tree deletion.
We can also see that $MT_{SSI}$ and $MT_{OCC}$ cannot achieve the same performance as $MT_{2PL}$. Both $MT_{SSI}$ and $MT_{OCC}$ suffer from high abort rates, which are incurred by conflicts on the \emph{district} table.

\begin{figure}
\vspace{-0.1cm}
\hspace{-0.3cm}
\subfigure[{\small Throughput}]
{
    \includegraphics[width=0.5\linewidth]{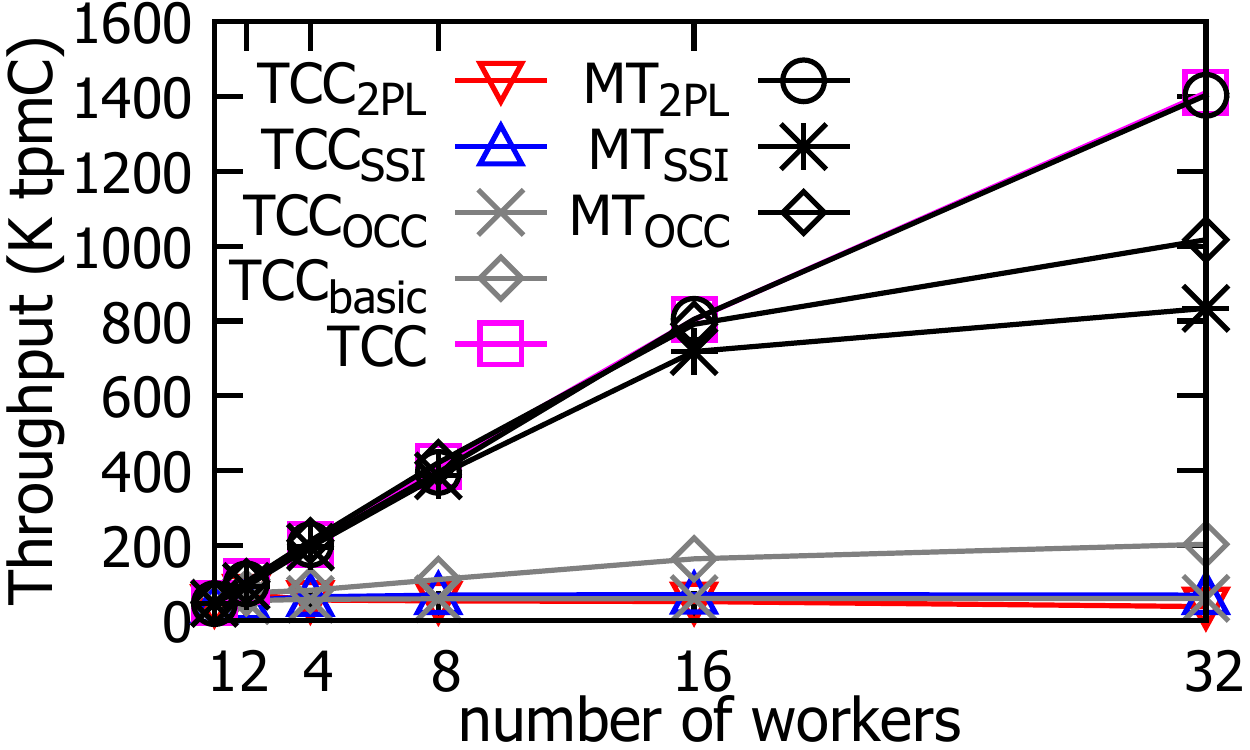}
}
\hspace{-0.4cm}
\subfigure[{\small Abort Rate}]
{
    \includegraphics[width=0.5\linewidth]{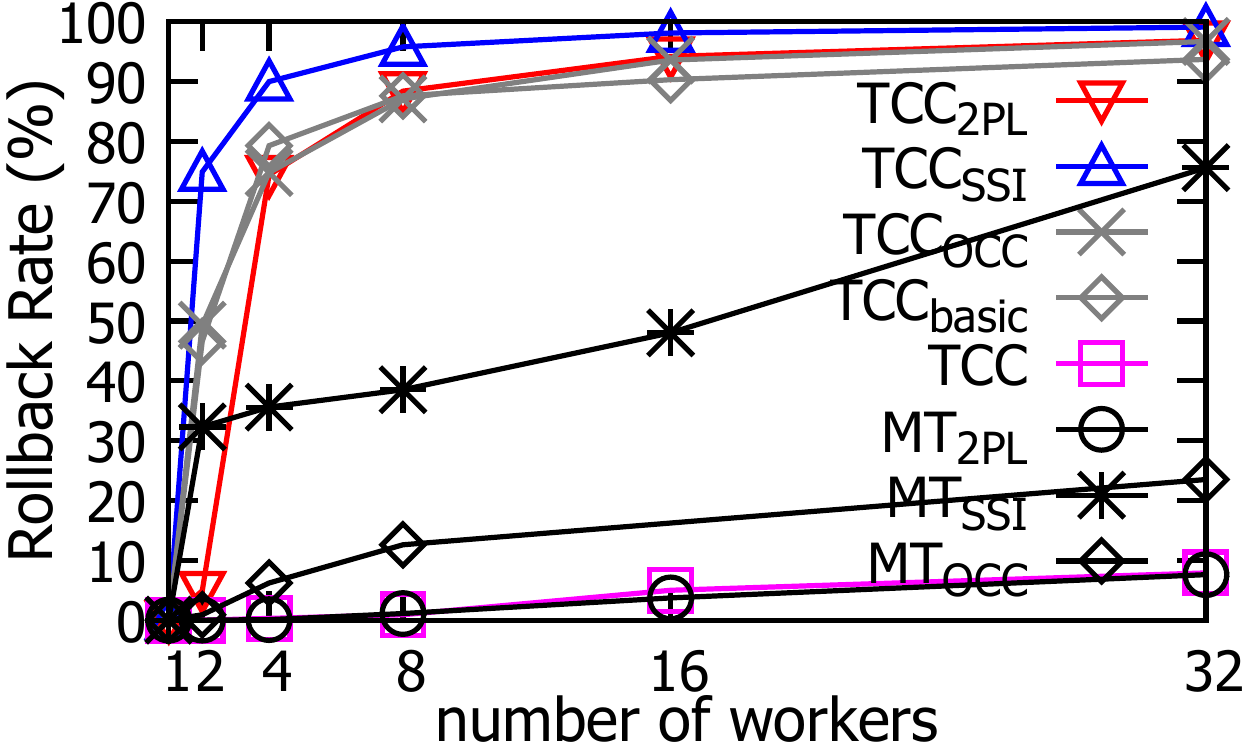}
	\label{fig:new_order:abort}
}
\vspace{-0.4cm}
    \caption{\small Performance on Revised New-Order Transactions.}
    \label{fig:new_order}
\vspace{-0.4cm}
\end{figure}

Our artificial workload was designed to demonstrate the difference between $TCC$ and $TCC_{basic}$. It contains two types of transactions. A short transaction is composed of $2$ B-tree insertions. A long transaction is composed of $8$ B-tree insertions. All insertions attempt to insert into the last leaf node of a B-tree. We ran the two types of transactions separately.
Figure~\ref{fig:semantic} shows the results. As we can see, $TCC$ achieved comparable performance as the original Shore-MT on both types of transactions.
$TCC_{basic}$ performed significantly worse than $TCC$, especially in the case of long transactions.
As $TCC$ considers commutativity of B-tree insertions, it allows multiple transactions to insert into the same B-tree leaf node concurrently.
In contrast, $TCC_{basic}$ does not allow such concurrency. When one transaction is performing the insertion, the other concurrent transactions have to be aborted. The longer the transactions, the higher the abort rate.

Therefore, we can conclude that data semantics can be powerful for enhancing the performance of TCC. Especially for data operations that are prone to confliction, it seems crucial to make them commutative and reversible (through inverse operations).

\begin{figure}[t]
      \centering
      \vspace{-0.5cm}
      \begin{minipage}{\linewidth}
          \begin{figure}[H]
              \subfigure[\small Throughput]
                {
                    \includegraphics[width = 0.49\linewidth]{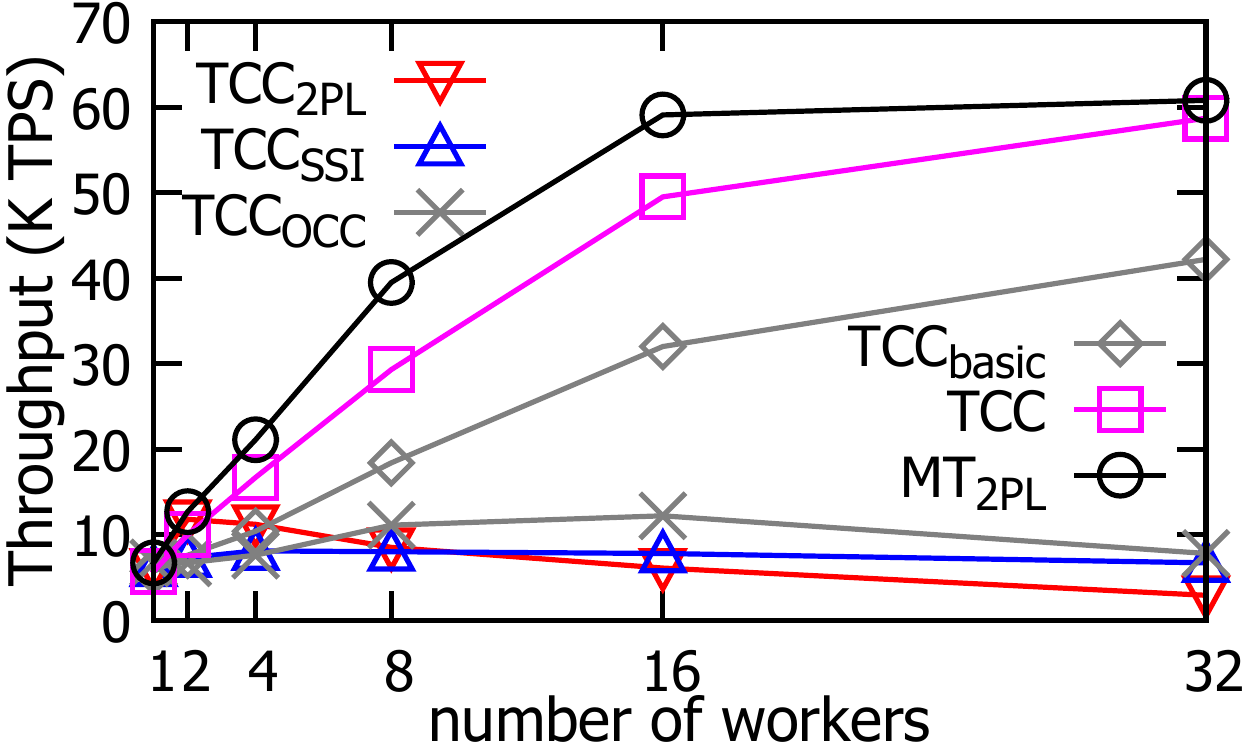}
                }
                \hspace{-0.5cm}
                \subfigure[\small Abort Rate]
                {
                    \includegraphics[width=0.49\linewidth]{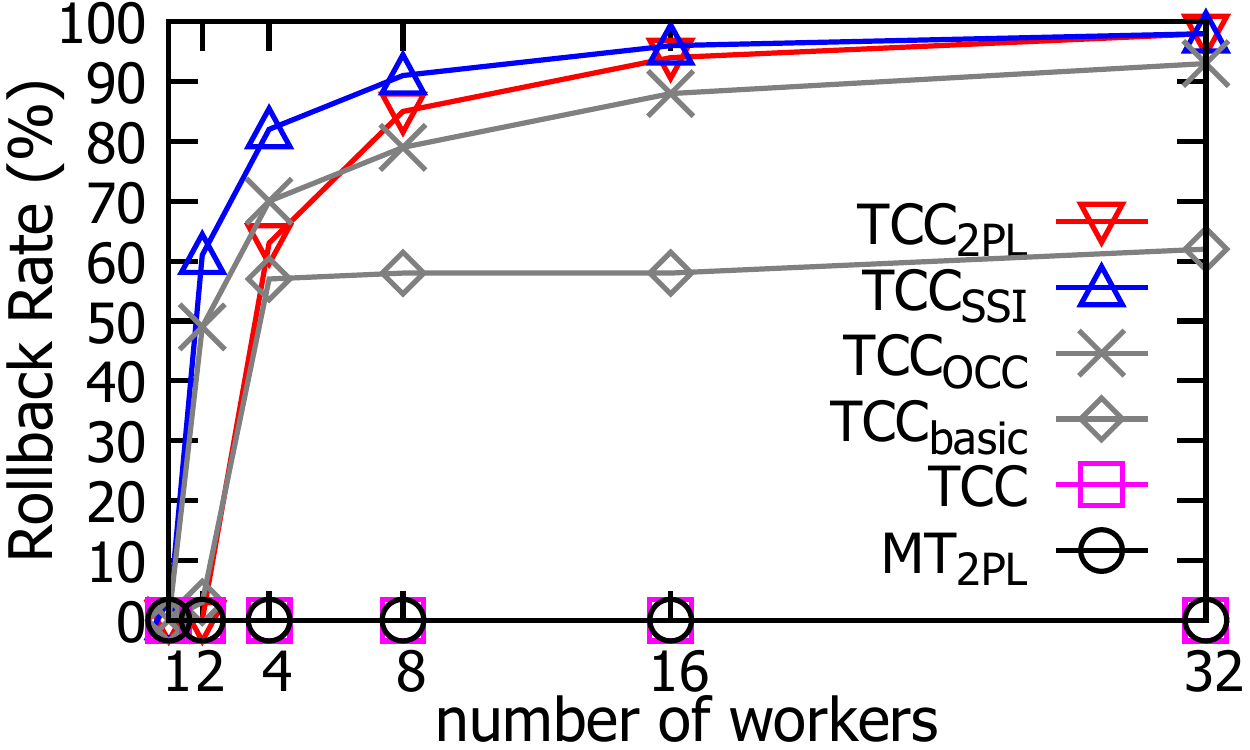}
                }
                \vspace{-0.5cm}
                \caption*{\small I. Short Transaction case.}
                \vspace{-0.6cm}
                \label{fig:tx1}
          \end{figure}
      \end{minipage}
      \hspace{-0.5cm}
      \begin{minipage}{\linewidth}
          \begin{figure}[H]
             \subfigure[\small Throughput]
                {
                    \includegraphics[width = 0.49\linewidth]{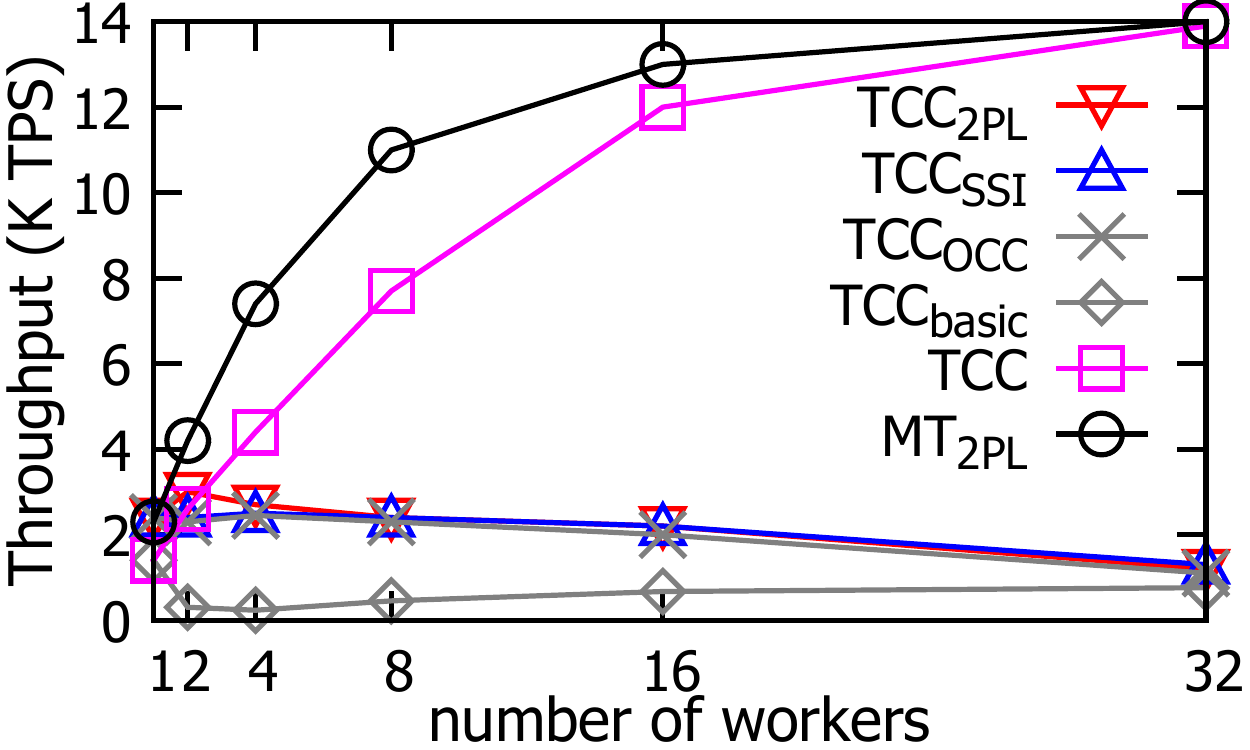}
                }
                \hspace{-0.5cm}
                \subfigure[\small Abort Rate]
                {
                    \includegraphics[width=0.49\linewidth]{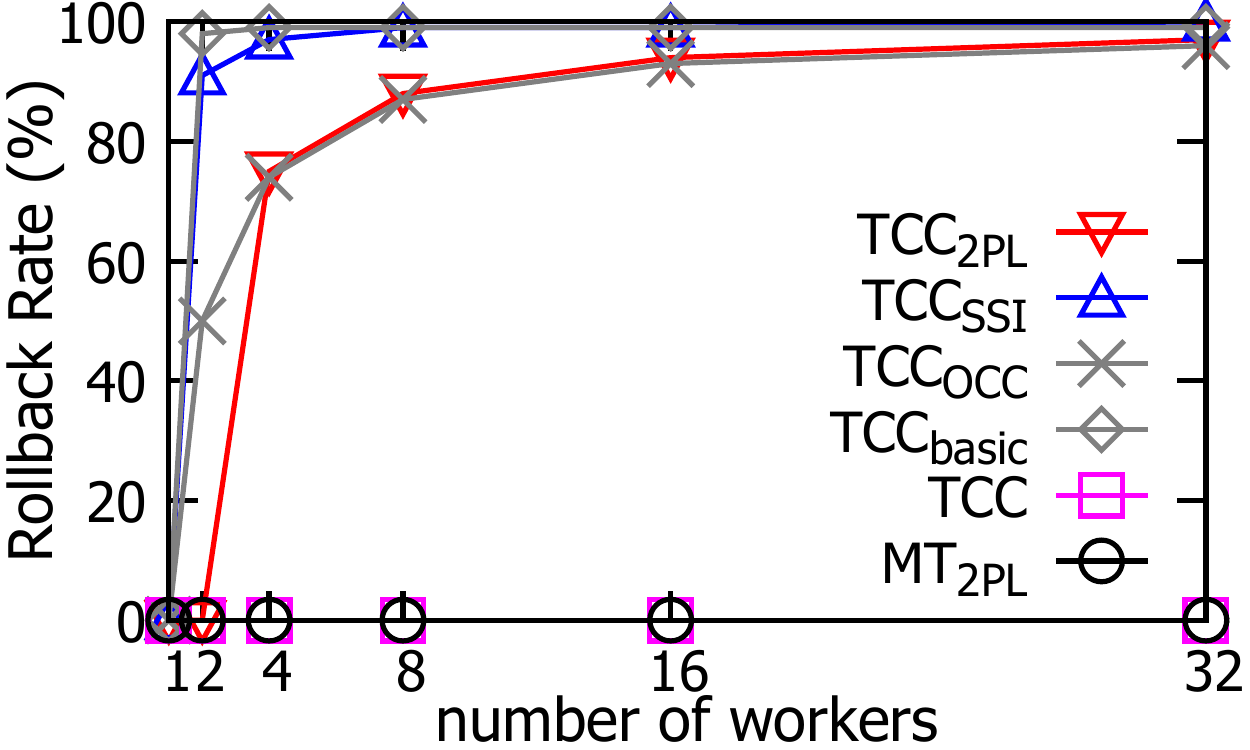}
                }
                \vspace{-0.5cm}
                \caption*{\small II. Long Transaction case.}
                \vspace{-0.2cm}
                \label{fig:tx16}
          \end{figure}
      \end{minipage}
      \vspace{-0.2cm}
      \caption{\small Performance on Short/Long Transactions.}
      \vspace{-0.3cm}
      \label{fig:semantic}
\end{figure}

\begin{figure}
\vspace{-0cm}
\hspace{-0.3cm}
\subfigure[{\small Throughput}]
{
    \includegraphics[width=0.5\linewidth]{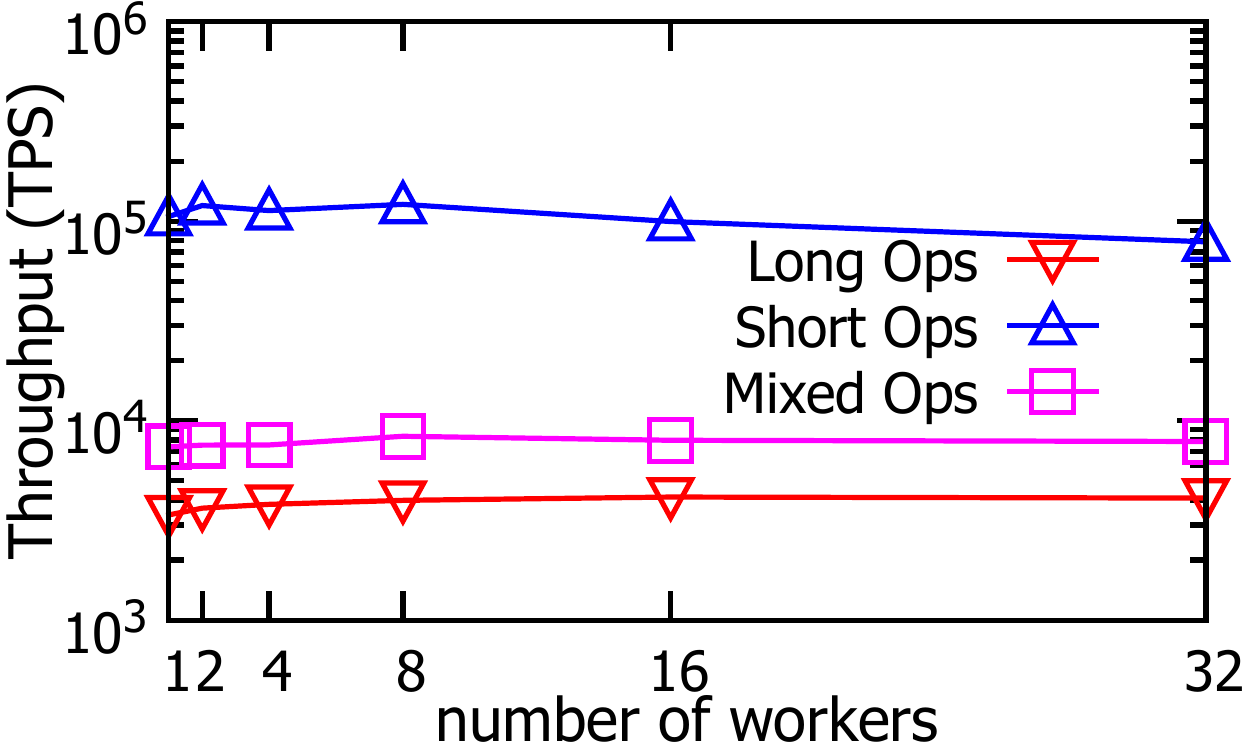}
}
\hspace{-0.4cm}
\subfigure[{\small Abort Rate}]
{
    \includegraphics[width=0.5\linewidth]{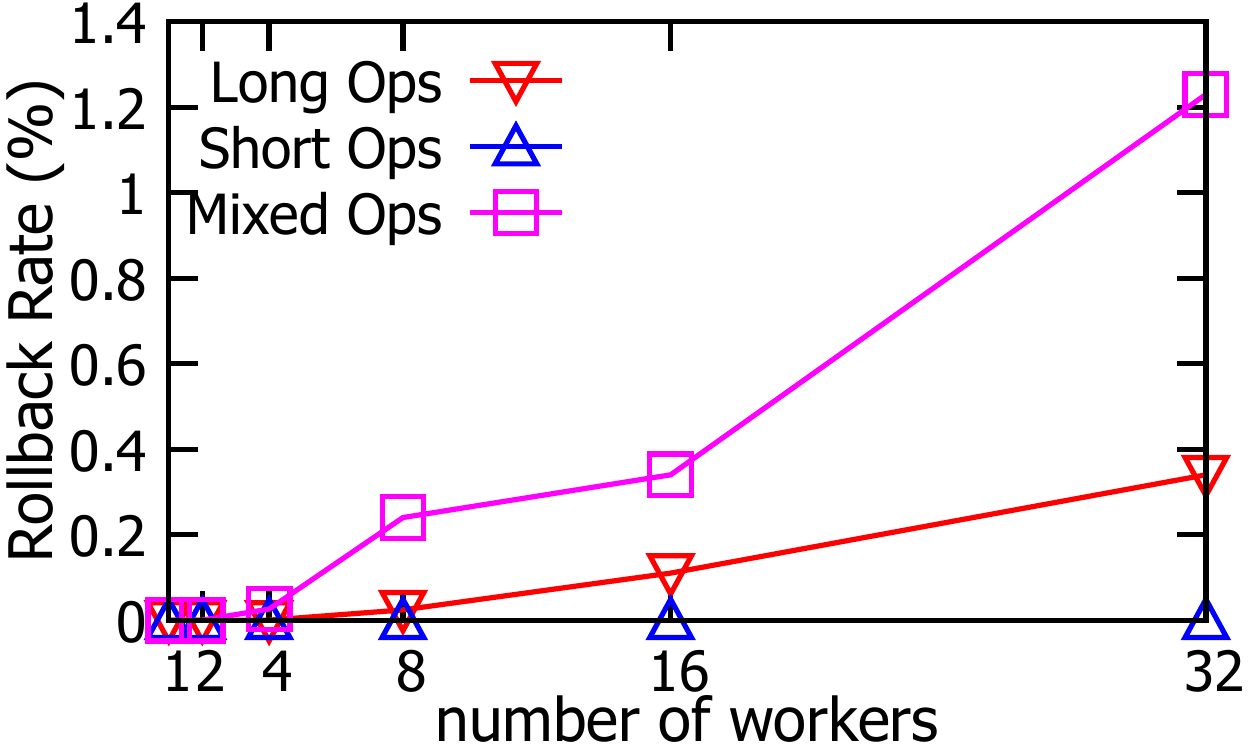}
}
\vspace{-0.5cm}
    \caption{\small Chance of Extra Aborts.}
    \label{fig:false_abort}
\vspace{-0.4cm}
\end{figure}

As TCC performs locking only after the operational latches are released, it may lead to extra aborts (Algorithm~\ref{alg:op_exec} Line~\ref{alg:op:10}). We used three types of workload to measure the abort rate caused by the separation between the latching and locking phases. We used short and long data operations. Short operations update a single record. Long operations update a set of $100$ records. In the first type of workload, each transaction consists of a short operation. In the second type of workload, each transaction consists of a long operation. In the third type of workload, each transaction consists of a randomly selected short or long operation.

Figure~\ref{fig:false_abort} shows the results of the experiments on the three types of workload. We can see that the mixed workload is more likely to incur abort. A long operation provides a relatively large window between the latching phase and the locking phase. This gives short operations more chance to jump the order and incur abort. Nevertheless, such abort is not a serious concern to TCC. As shown in Figure~\ref{fig:false_abort}, it does not occur frequently even in the worst case.


\vspace{-0.1cm}
\subsection{Experiments on OLTP Benchmarks}
\vspace{-0.0cm}

Our final set of experiments was conducted on the benchmarks of TATP and TPC-C.

For the experiments on TATP, we set the scale factor to $10$. In each test, we ran the TATP workload for more than 10 minutes. We increased the number of worker threads to see how the system scales. Figure~\ref{fig:tatp} shows the results of the experiments.
As the degree of contention is low in TATP, all CC mechanisms scale quite well. We could not see significant difference among the different approaches.

\begin{figure}
\hspace{-0.3cm}
\subfigure[{\small Throughput}]
{
    \includegraphics[width=0.5\linewidth]{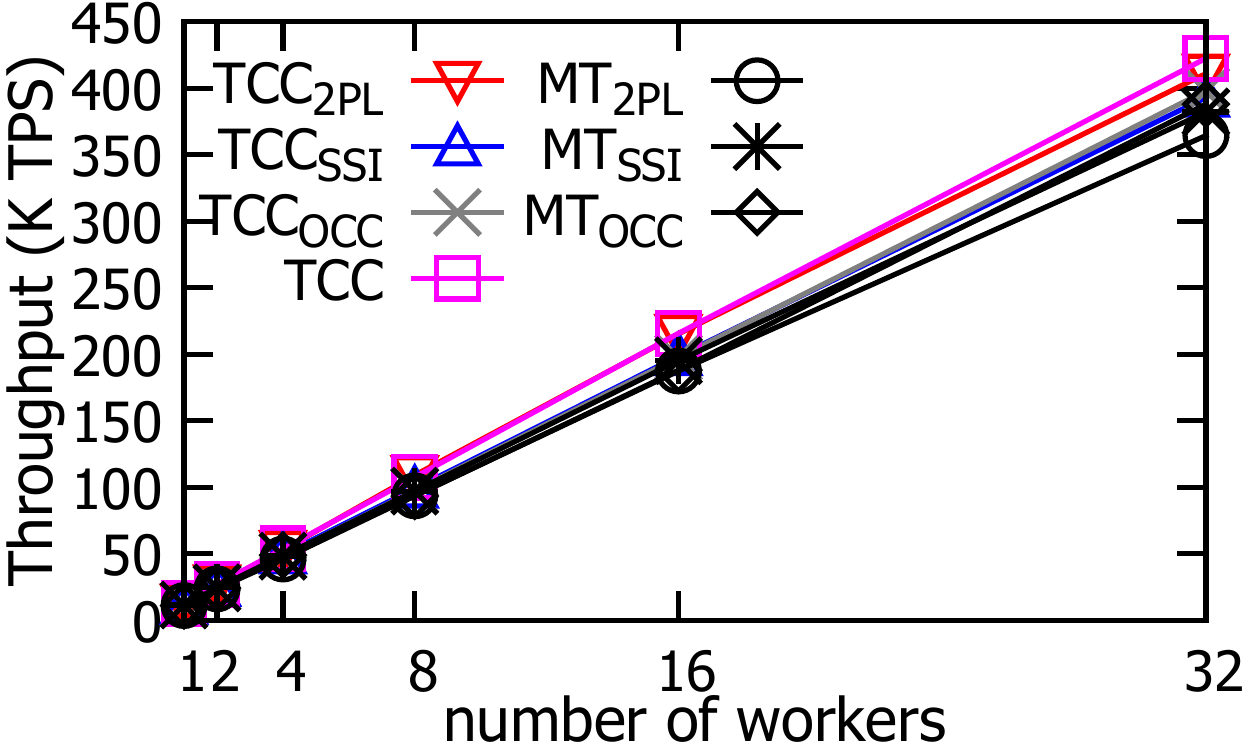}
}
\hspace{-0.4cm}
\subfigure[{\small Abort Rate}]
{
    \includegraphics[width=0.5\linewidth]{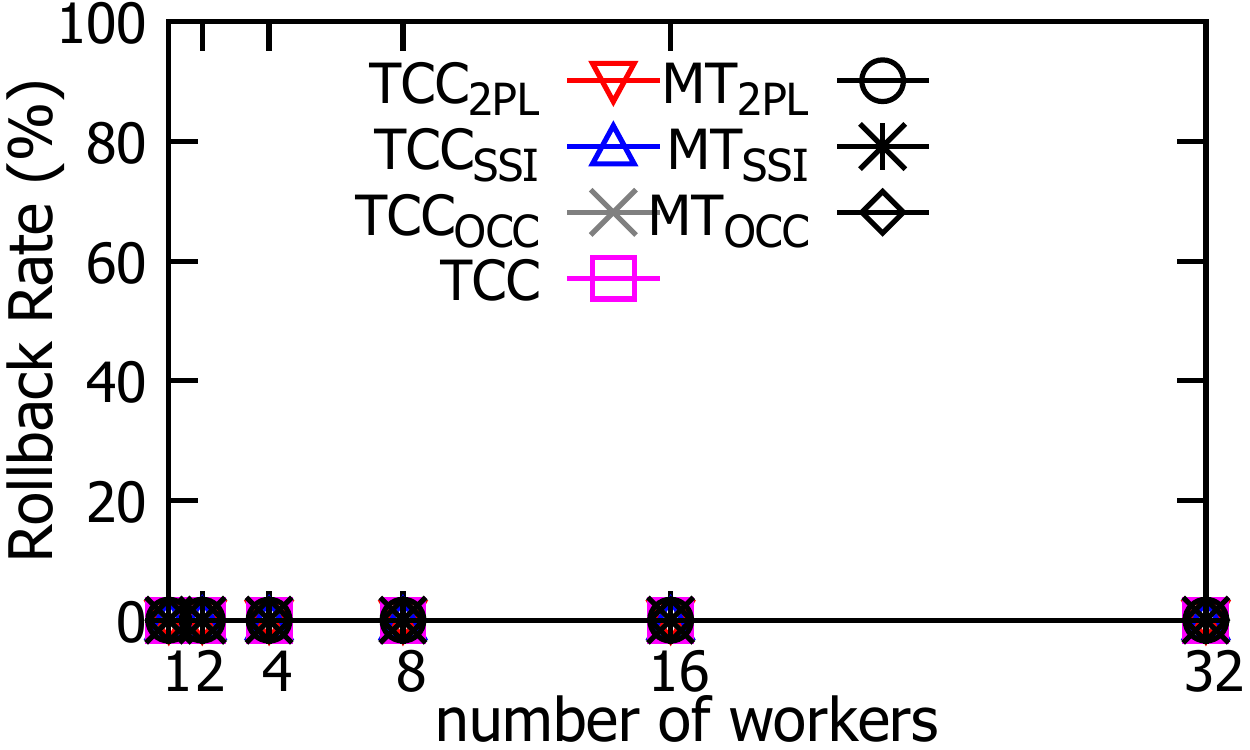}
    \label{fig:tatp_abort_32}
}
\vspace{-0.5cm}
    \caption{\small Performance on TATP.}
    \label{fig:tatp}
\vspace{-0.3cm}
\end{figure}

For the experiments on TPC-C, we set the scale factor to $10$. In each test, we ran the standard TPC-C workload (without wait time) for 10 minutes. We also increased the number of worker threads to evaluate the scalability. Figure~\ref{fig:tpcc} shows the results of the experiments.



We can see that most of the CC mechanisms achieved relatively good performance on TPC-C, except $TCC_{2PL}$. $TCC_{2PL}$ scales well when there are less than $8$ workers. When the degree of concurrency exceeds 8, its throughput drops quickly. This is mainly due to that $TCC_{2PL}$ cannot deal with ``select-for-update'' request. $TCC_{2PL}$ has no concept of operation. When encountering ``select-for-update'', it cannot predict that the data blocks accessed by ``select'' will be subsequently ``updated''. Thus, it had to frequently perform lock upgrades, which led to a large number of deadlocks.
In contrast, $TCC$ is able to deal with the ``select-for-update'' semantics. When encountering ``select-for-update'', the data organization tier can explicitly tell $TCC$ that the corresponding operation should place exclusive locks on the data blocks it has accessed. Then, $TCC$ can avoid lock upgrade. As $TCC_{SSI}$ and $TCC_{OCC}$ do not perform locking, they do not suffer from the lock upgrading problems.

Comparing the three Shore-MT mechanisms, we can find that $MT_{2PL}$ performs slightly worse than $MT_{SSI}$ and $MT_{OCC}$. $MT_{2PL}$ mainly suffers from the implementation of its predicate locks. When a transaction accesses indexed records in the \emph{warehouse} and \emph{district} tables, it will place predicate locks. The predicate locks are first shared locks. When updates are performed, they are upgraded to exclusive locks. Lock upgrade can cause deadlocks, which affect $MT_{2PL}$'s performance.

In fact, we found that 2PL in general do not perform as well as SSI in TPC-C. The Payment transactions of TPC-C always need to update the \emph{warehouse} table, while the New-Order transactions always need to read the \emph{warehouse} table. When 2PL is adopted, a large number of transactions will be blocked by the read-write conflicts. In contrast, the SSI approaches do not face this problem. $TCC$ adopts 2PL as its transactional scheduler. In TPC-C, it cannot perform as well as $TCC_{SSI}$. Nevertheless, $TCC$ is superior to $TCC_{SSI}$ in robustness. As shown in our previous experiments, $TCC_{SSI}$ can exhibit very poor performance in a variety of cases. From this perspective, no one of $TCC_{SSI}$, $TCC_{2PL}$ and $TCC_{OCC}$ can compare to $TCC$.


\begin{figure}[t]
      \centering
             \subfigure[Throughput]
                {
                    \includegraphics[width = 0.49\linewidth]{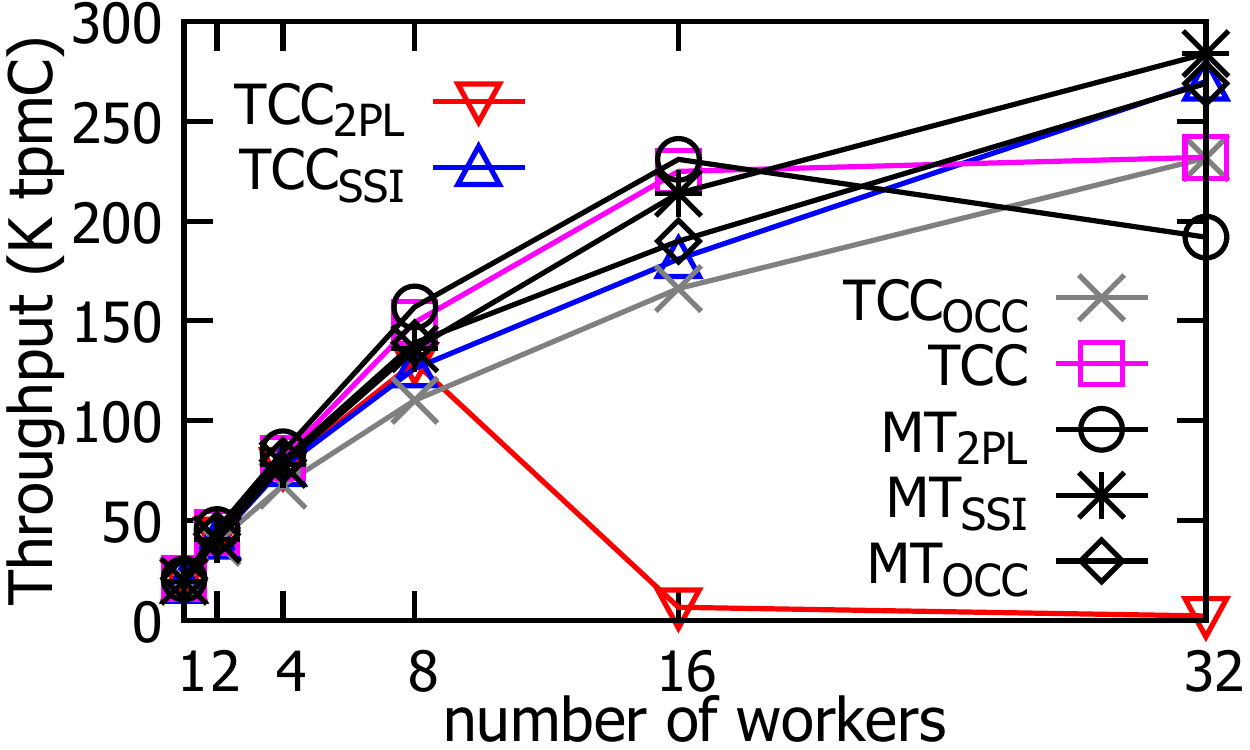}
                    \label{fig:revised_tpcc_throughput}
                }
                \hspace{-0.5cm}
                \subfigure[Abort Rate]
                {
                    \includegraphics[width=0.49\linewidth]{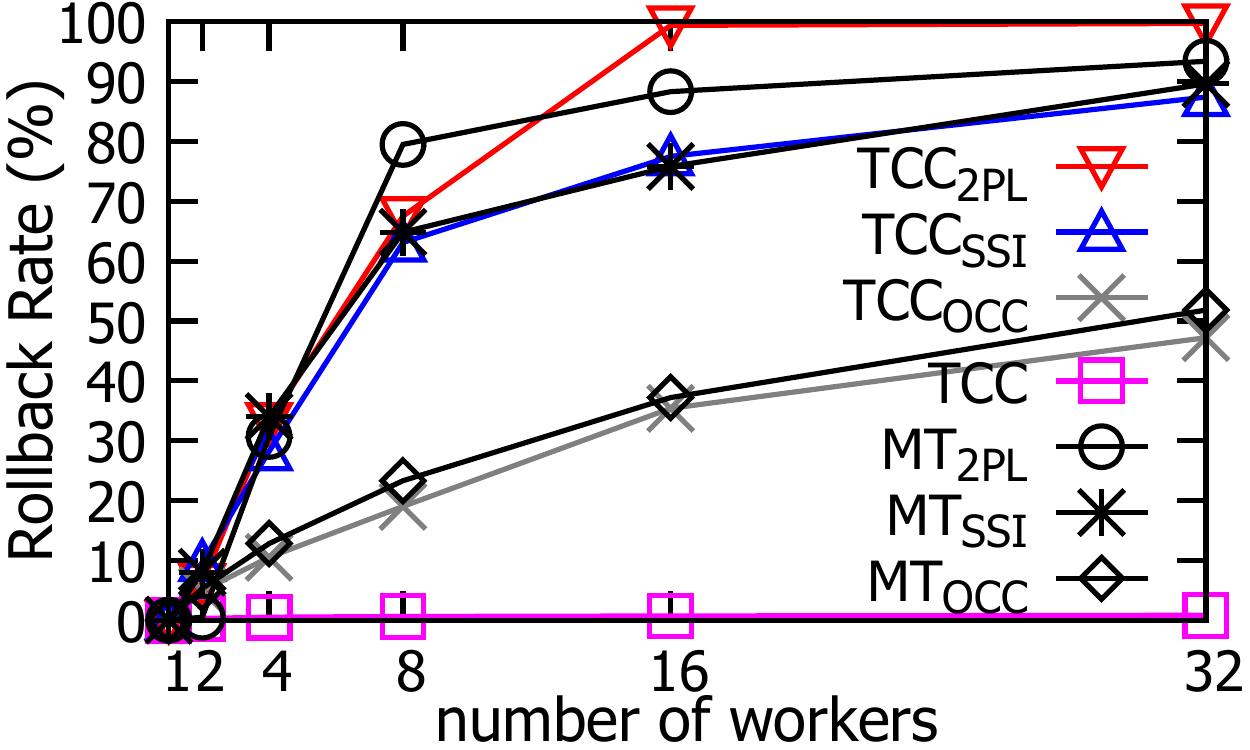}
                    \label{fig:revised_tpcc_abort}
                }
      \vspace{-0.5cm}
      \caption{\small Performance on TPCC.}
      \vspace{-0.3cm}
      \label{fig:tpcc}
\end{figure}

The experiments show that when TCC is taking care of the concurrency control of index structures, a DBMS can process transactions efficiently.
The good performance of TCC is attributable to both its robust operational scheduler and its ability to utilize data semantics.

\vspace{-0.1cm}
\section{Conclusion}
\vspace{-0cm}

In this paper, we attempted to separate the layer of concurrency control from a DBMS. Our results showed that the separation is feasible, at least on the indexes of a DBMS. On the one hand, transactional safety can be guaranteed. On the other hand, the performance issues caused by the separation is controllable. We believe that the separation will be enormously beneficial, as it can substantially improve the flexibility of a DBMS. With such flexibility, a DBMS will be easier to implement, modify and extend.

To make the separation work, it is important to have a progressive scheduler that is robust against unpredictable data accesses. It is also important to allow the DBMS to declare data semantics to the CC layer, especially on data operations that are prone to confliction. To achieve these, we created TCC, which can deal the the predictability and semantic gaps effectively.


However, further research is required to make TCC practical. First, TCC needs to tested in a broader scope of scenarios. In this paper, we evaluated it on the indexes of a real-world DBMS. Its applicability on an entire DBMS, especially its components on metadata management and space management, requires further investigation.
Second, a transparent recovery mechanism should be integrated with TCC to support full-scale ACID. Third, some principles need to be identified to help system developers make good use of TCC, including the guidelines on how to determine the granularity of data operations, how to create commutative and inverse operations, etc.

\vspace{-0.1cm}
\bibliographystyle{abbrv}
\bibliography{vldb_sample}  

\balance



\end{document}